\def\snap@providesfile#1[#2]{%
  \wlog{File: #1 #2}%
  \if\expandafter\snap@graphic@test\expanded{#2}@@\@nil
    \snap@record@graphic#1\relax #2 (type ??)\@nil
  \else
    \expandafter\xdef\csname ver@#1\endcsname{#2}%
  \fi
  \endgroup
}
\numberwithin{equation}{section}
\tikzset{
  vertex/.style={circle, draw=white,
    fill=black!50, thick, inner sep=0pt, minimum size=1.3mm},
  vertexbb/.style={circle, draw=black,
    fill=black!40, thin, inner sep=0pt, minimum size=1.1mm,
    opacity=1},
  vertexbw/.style={circle, draw=white,
    fill=black!30, semithick, inner sep=0pt, minimum size=1.3mm},
  lab/.style={circle,fill=white,outer sep=2pt,inner
    sep=0pt,auto,midway,scale=.6},
  lab2/.style={circle,fill=white,outer sep=4pt,inner sep=0pt,midway,auto,scale=.6},
  arete/.style={
    ultra thick},
  presimp/.style={white, thick},
  simp/.style={gray, thin, opacity=1},
  behind/.style={gray, thick, opacity=.3},
  prep/.style={line width=3pt, white},
  main/.style={decoration={random
      steps,segment length=5pt,amplitude=1pt}},
  maing/.style={decoration={random
    steps,segment length=6pt,amplitude=2pt}},
}
\setlist{leftmargin=*,labelindent=17.0pt}
\DeclareMathOperator{\fabRe}{Re}
\renewcommand{\Re}{\fabRe}
\DeclareMathOperator{\fabIm}{Im}
\renewcommand{\Im}{\fabIm}
\newcommand{\sgn}{\textup{sgn}}
\newcommand{\End}{\textup{\textup{End}}}
\newcommand{\EndMp}{\End(M,p)}
\newcommand{\EndCO}{\End(\C,0)}
\NewDocumentCommand{\scalprodtens}{O{\ } O{\ }}{\langle #1,#2\rangle}
\newcommand{\tens}[1]{\mathbb{#1}}
\newcommand{\tensG}{\tens{G}}
\NewDocumentCommand{\ddeg}{s}{%
  \IfBooleanTF #1
  {\omega}
  {\operatorname{\omega}}
}
\newcommand{\IFR}[1]{\map{#1}}
\newcommand{\ifG}{\IFR{G}}
\newcommand{\connected}{G} 
\newcommand{\vertex}{\Gamma} 
\newcommand{\selfnrjsymbol}{\Sigma}
\NewDocumentCommand{\cof}{m O{b} O{mel}}{
  \IfNoValueTF{#2}{
    \connected^{\textup{#3}}_{\!#1}
  }{
    \connected^{\textup{#3}}_{\!#1,\textit{#2}}
  }  
}
\NewDocumentCommand{\cofslice}{m m O{b} O{mel}}{
    \IfNoValueTF{#3}{
      \connected^{\textup{#4}}_{\!#1;#2}
    }{
      \connected^{\textup{#4}}_{\!#1,\textit{#3};#2}
    }
  }
  \newcommand{\cofmrslice}[2]{\cofslice{#1}{#2}[mr]}
\NewDocumentCommand{\cofer}{m m O{mel}}{
  \connected^{\textup{#3}}_{\!#1;#2}
}
\newcommand{\selfnrjGFb}{\bar{\mathbf{\selfnrjsymbol}}^{\textup{mel}}}
\NewDocumentCommand{\opifGFb}{m}{\bar{\mathbf{\vertex}}^{\textup{mel}}_{\!#1}}
\NewDocumentCommand{\opif}{m O{b} O{mel}}{
  \IfNoValueTF{#2}{
    \vertex^{\textup{#3}}_{\!#1}
  }{
    \vertex^{\textup{#3}}_{\!#1,\textit{#2}}
  }
}
\NewDocumentCommand{\selfnrj}{O{b} O{mel}}{
  \IfNoValueTF{#2}{
    \selfnrjsymbol^{\textup{#2}}
  }{
    \selfnrjsymbol^{\textup{#2}}_{\textit{#1}}
  }
}
\NewDocumentCommand{\selfnrjmono}{O{b} O{mel}}{
  \IfNoValueTF{#2}{
    \bar\selfnrjsymbol^{\textup{#2}}
  }{
    \bar\selfnrjsymbol^{\textup{#2}}_{\textit{#1}}
  }
}
\NewDocumentCommand{\opifer}{m m O{mel}}{
  \vertex^{\textup{#3}}_{\!#1,\textit{er};#2}
}
\NewDocumentCommand{\opifermono}{m m O{mel}}{
  \bar{\vertex}^{\textup{#3}}_{\!#1,\textit{er};#2}
}
\NewDocumentCommand{\opifmono}{m O{b} O{mel}}{
    \IfNoValueTF{#2}{
      \bar{\vertex}^{\textup{#3}}_{\!#1}
    }{
      \bar{\vertex}^{\textup{#3}}_{\!#1,\textit{#2}}
    }
  }
\NewDocumentCommand{\opifmrslice}{m m O{mel}}{
  \vertex^{\textup{#3}}_{\!#1,\textit{mr};#2}
}
\NewDocumentCommand{\selfnrjmrslice}{m O{mel}}{
  \selfnrjsymbol^{\textup{#2}}_{\textit{mr};#1}
}
\NewDocumentCommand{\selfnrjmonomrslice}{m O{mel}}{
  \bar\selfnrjsymbol^{\textup{#2}}_{\textit{mr};#1}
}
\NewDocumentCommand{\opifmonomrslice}{m m O{mel}}{
  \bar{\vertex}^{\textup{#3}}_{\!#1,\textit{mr};#2}
}
\NewDocumentCommand{\opifmonobslice}{m m O{mel}}{
  \bar{\vertex}^{\textup{#3}}_{\!#1,\textit{b};#2}
}
\newcommand{\cofbare}[1]{\cof{#1}}
\newcommand{\opifbare}[1]{\opif{#1}}
\newcommand{\opifbaremono}[1]{\opifmono{#1}} 
 \newcommand{\opifmr}[1]{\opif{#1}[mr]} 
\newcommand{\opifr}[1]{\opif{#1}[r]} 
\newcommand{\opifmonor}[1]{\opifmono{#1}[r]}
\NewDocumentCommand{\disccube}{O{d}}{\sbe{H}{$\Lambda$}_{#1}} 
\NewDocumentCommand{\pdisccube}{O{d}}{\sbe{\dot H}{$\Lambda$}_{#1}} 
\NewDocumentCommand{\contcube}{O{d}}{\sbe{\cH}{$\Lambda$}_{#1}} 
\NewDocumentCommand{\pcontcube}{O{d}}{\sbe{\dot\cH}{$\Lambda$}_{#1}} 
\newcommand{\Hilb}{\cH}
\newcommand{\Htens}{\Hilb^{\otimes}}
\newcommand{\Hop}[1][]{L(\Hilb_{#1})}
\newcommand{\Hopdirect}{\Hop^{\hspace{-1pt}\times}}
\newcommand{\transpose}{\scalebox{.6}{$T$}}
\NewDocumentCommand{\Tr}{o o o}{
  \IfNoValueTF{#3}%
  {
    \IfNoValueTF{#1}%
    {\IfNoValueTF{#2}%
      {\Trace}{\Trace\sbr{#2}}
    }%
    {\IfNoValueTF{#2}
      {\Trace_{#1}}{\Trace_{#1}\sbr{#2}}
    }
  }{
    \IfNoValueTF{#1}%
    {\IfNoValueTF{#2}%
      {\Trace}{\Trace\sbr[#3]{#2}}
    }%
    {\IfNoValueTF{#2}
      {\Trace_{#1}}{\Trace_{#1}\sbr[#3]{#2}}
    }
  }
}
\DeclareMathOperator{\Itens}{\I}
\newcommand{\tuple}[1]{\mathbf{#1}}
\newcommand{\ntup}{\tuple{n}}
\newcommand{\nbtup}{\tuple{\bar n}}
\newcommand{\ptup}{\tuple{p}}
\newcommand{\qbtup}{\tuple{\bar q}}
\newcommand{\mtup}{\tuple{m}}
\newcommand{\mbtup}{\tuple{\bar m}}
\newcommand{\Torus}{\T}
\renewcommand{\transpose}{t}
\title{Can we make sense out of ``Tensor Field Theory''?}
\shorttitle{Can we make sense out of ``Tensor Field Theory''?}
\author{V.~Rivasseau \& F.~Vignes-Tourneret}
\dedicatory{}
\begin{document}
\maketitle

\begin{fabsmfabstract}
We continue the constructive program about tensor field theory through
the next natural model, namely the rank five tensor theory with
quartic \glstext{melonic} interactions and propagator inverse of the
Laplacian on $U(1)^5$. We make a first step towards its construction by establishing its power counting, identifiying the divergent graphs and performing a careful study of (a slight modification of) its RG flow. Thus we give strong evidence that this just renormalizable tensor field theory is non perturbatively asymptotically free.
\end{fabsmfabstract}

\vfil
\tableofcontents
\newpage

\section*{Introduction}
\label{intro}
\etoctoccontentsline*{section}{Introduction}{1}

Recently Hairer \cite{Hairer2014aa}  solved a series of  stochastic differential equations
such as the KPZ equation or the $\phi^4_3$ equation. An advantage of
such equations is that they are better suited to Monte Carlo
computations than functional integrals. Since then, in a systematic series of impressive articles,  Hairer and
his collaborators \cite{Bruned2018aa,Chandra2016aa,Bruned2017aa}
extended their initial programme to cover the BPHZ renormalization
\cite{Bogoliubov1957gp,Hepp1966eg,Zimmermann1969jj}. In contrast to dimensional renormalization, BPHZ renormalization is adapted to 
the program of constructive field theory. It incorporates the
multi-scale expansion, a main constructive tool \cite{Feldman1985aa}, and a more up-to-date mathematical formulation of renormalization based on Hopf algebras \cite{Connes1998lr}.

To the attentive observer, constructive field theory, namely the point of view which Hairer called \emph{static},  is rapidly merging into the  
\emph{regularity structures and corresponding models} of Hairer, which
he called  the \emph{dynamic} point of view. In the language of quantum field theory, it happens that the equations which Hairer solved were all Bosonic super-renormalizable. 
Now is time for advancing the next step: the bosonic just-renomalizable quantum field theories. 
The BPHZ  renormalization was initially designed to cover theories such as QED in dimension four, the main theory at the time. 
But a profound objection were raised, initially by Landau. Now we have a name for that obstacle: QED is \emph{not  asymptotically  free}.
 Fortunately for the future of quantum field theory, the discovery that electroweak and strong interactions  are asymptotically free
were instrumental in its ``rehabilitation'' as a fundamental theory.

A famous theorem due to Coleman states that any \emph{local Bosonic asymptotically free field theory must include non-Abelian gauge theories}.
Non-Abelian gauge theories lead to an additional severe problem: the presence of Gribov ambiguities \cite{Gribov1978aa} due to gauge fixing. The way out of these difficulties is a main reason for considering the stochastic quantization \cite{Parisi1981aa}, since in this method there are no need to fix the gauge, so no need to solve Gribov ambiguities. But it remains still a tough programme.\\

On the road to this lofty goal, we propose an intermediate step which might be worth the effort in itself. It \emph{escapes} 
Coleman's theorem by being a \emph{non-local} theory.  We  have in mind
the tensor field theory. Born in the quantum gravity craddle \cite{Ambjorn1991aa,Sasakura1991aa,Gross1992aa}, 
random tensor models extend random matrix models and 
therefore were introduced as promising candidates for an \emph{ab
  initio} quantization of gravity  in rank/dimension higher than $2$. 
However their study is less advanced since they lacked for a long time
an analog of the famous 't~Hooft $1/N$ expansion for random matrix
models. Their modern reformulation \cite{Gurau2017ab,Gurau2016ab,Gurau2012ac}
 considers \emph{unsymmetrized} random tensors\footnote{{\it
     Symmetrized} random tensors are more difficult  but melons still
   dominate again \cite{Benedetti2019aa,Carrozza2018aa}, at least in
   rank 3.}, a crucial improvement which let the large $N$ limit appear
 \cite{Gurau2011ab,Gurau2011ac,Gurau2012aa}. The limit of large matrix
 models is made of planar graphs. Surprisingly perhaps, the key to the
 $1/N$ tensors is made of a new and \emph{simpler} class  of Feynman graphs  that we called \emph{melonic}. They form the dominant graphs in this limit \cite{Bonzom2011aa,Bonzom2012ac}\footnote{In quantum field theory an early reference at those Feymnam graphs appear in \cite{Calan1982aa}}.

Random tensor models can be further divided into fully invariant models, in which both propagator and interaction are left invariant  by the symmetry (such as $U(N)^{\otimes d}$),
and \emph{non-local field theories} where the propagator is for example the ordinary Laplacien on  the torus $U(1)^{\otimes d}$ (which breaks the symmetry) 
but in which the interaction is left invariant by the symmetry. To our
own surprise, such just-renormalizable models turn out to be
asymptotically free \cite{Ben-Geloun2016aa,Rivasseau2015aa}. In particular the simplest such model in this category, nicknamed $T^4_5$ theory is asymptotically free! It made them an ideal playground for advancing the mathematics both in the static sense of constructive theory and in the sense of Hairer's stochastic quantization. This fact now many years old was perhaps overlooked by the theoretical and mathematical physics community.\\

Also the tensor methods  and models in quantum gravity that one of us
baptized the tensor track
\cite{Rivasseau2011ab,Rivasseau2012ab,Rivasseau2013ac,Rivasseau2016ad,Delporte2018aa,Delporte2020aa}
was given a big boost from an unexpected corner. Since the advent of the SYK model \cite{Kitaev2015aa,Polchinski2016aa,Madacena2016aa} it appears that 
1-dimensional quantum random tensor is even richer than
the $0$-dimensional ordinary random tensor theory \cite{Witten2016aa,Gurau2016aa,Carrozza2016aa,Klebanov2017aa,}. It is approximately reparametrization invariant (\ie conformal), includes holography and it saturates the MSS bound \cite{Maldacena2016aa}.

In fact  the real applications, as it often happens, might be
elsewhere. Today we probe reality by multiples sensors. That is, we
 represent that reality by multidimensional big arrays which are, in the mathematical sense, nothing but \emph{big tensors}.
Hence we need to develop better and more versatile algorithms to probe tensors in this limit.
  Such algorithms could benefit of the modern formulation of random tensors.
This is especially true for those separating signal to noise.
 One example is tensor PCA \cite{Richard2014aa,Ben-Arous2017aa,Ben-Arous2018aa,Zare2018aa}, which extends classical matrix PCA to tensors.
Such algorithms could be applied in a variety of domains,
 high energy physics (detection of particule trajectories), spectral imaging or videos, neuroimaging, chemometrics,
  pharmaceutics,  biometrics, social networks and many more.
In fact the analysis of big tensors form a bottleneck in such a dazzling kaleidoscope
that it is no exaggeration to say that any main progress in this field
may create a revolution in artificial intelligence.\\

Now let us come down to earth. The  \emph{tensor theory new constructive program} \cite{Rivasseau2016ab} is well advanced in the super-renormalizable case \cite{Gurau2013ac,Delepouve2014ab,}. In \cite{Delepouve2014aa} 
 the $U(1)$ rank-three model  with inverse Laplacian propagator and quartic melonic interactions, which we nickname $T^4_3$, was solved.
 In \cite{Rivasseau2016aa} the $U(1)$ rank-four model $T^4_4$ was   solved. This model looks comparable in renormalization difficulty to the ordinary $\phi^4_3$ theory, but non-locality and the graphs are more complex
hence requires several additional non-trivial arguments. The next goal
is to treat just-renormalizable asymptotically free Bosonic
$T^4_5$. In 1979, G. 't~Hooft gave a series of lectures entitled
\emph{Can we make sense out of ``Quantum Chromodynamics''?}
\cite{t-Hooft1979aa}\footnote{The title of our article refers
  obviously to this seminal work.}. He presented there arguments and strategies to
control QCD via the study of its singularities in the Borel plane. To
this aim, he had to control the flow of the coupling constant in the
complex plane. The tensor field theory $T^{4}_{5}$ is a perfect
playground for constructive purposes as its flows can be controled
precisely thanks to its simple and exponentially bounded divergent
sector. In the present paper we make a further step by connecting it, modulo certain hypotheses, to an autonomous non-linear flow of the theory of dynamical systems.

The $T^{4}_{5}$ tensor field theory is precisely defined in
\cref{sec-model}. In particular, we present the cut-offs we use and an
alternative representation of the model in terms of an intermediate
matrix field. \Cref{sec-feynman-graphs} is devoted to the three
different representations of Feynman graphs we need (tensor graphs,
coloured graphs and intermediate field maps) as well as related
concepts thereof. In \cref{sec-diverg-melon-sect} we derive the
power-counting, identify the families of divergent graphs and give the
recursive definitions of the \emph{melonic} correlation functions. For
constructive purposes, we will employ none of the bare, renormalized or
even fully effective perturbative expansions. In fact, it will be
preferable to fully mass renormalize the correlation functions but use
effective wave-functions and coupling constants. We define all these
objects in \cref{sec-pert-renormalization}. We also prove there that
effective wave-functions and coupling constants are analytic functions
of the bare coupling. The main result of
\cref{sec-pert-renormalization} is \cref{thm-asymptotic-freedom} which
consists in a non perturbative definition of the RG flow for the
coupling constant. A careful study of an approximation of this flow is
carried out in \cref{sec-holomorphic-rg-flow} using tools and concepts
from discrete and continuous holomorphic local dynamical systems. We
identify in particular ``cardioid-like'' domains of the complex plane
invariant under this modified RG flow, see
\cref{thm-complex-cubic-ODE,thm-cubic-cpx-flow-bounded,thm-complex-higher-ODE-upper-bound}.\\

Solving this $T_{5}^{4}$ model means defining its correlation
functions non perturbatively in the coupling constant $g$. More precisely
it requires to prove the existence of holomorphic functions of $g$ in
a (probably cardioid-like, with a cut on the negative real axis)
domain of the complex $g$-plane such that their Taylor expansions
coincide with the perturbative expansions of the (formal) correlation
functions of the model. Moreover these functions should very probably
be proven Borel summable.

To achieve that goal, one expresses the regularized and
renormalized correlation functions as series of analytic functions,
normally convergent in a domain the size of which is uniformly
bounded in the ultraviolet cutoff. The infinite cutoff limit is then
well-defined and analytic. These expansions consist in partial
resummations of the perturbative series, either expressed in an
intermediate field representation (this is the so-called Loop Vertex
Expansion \cite{Rivasseau2007aa,Delepouve2014aa,Rivasseau2016aa}) or
obtained from a specific change of the initial tensor field variables
(in which case it is called Loop Vertex Representation
\cite{Rivasseau2017aa,Krajewski2019aa,Krajewski2019ab}). Both
approaches have pros and cons but none of them appears totally suited
for the new challenges brought by the $T_{5}^{4}$ theory. An update of
all currently known approaches to constructive tensor field theory
seems necessary \cite{Rivasseau2018aa}.

\paragraph{Acknowledgments}

F.~V.-T. thanks É.~Fouassier and T.~Lepoutre for their
kind explanations about the Cauchy-Lipshitz theorem and other basic
facts about ODEs. He also thanks N.-V.~Dang for many interesting and useful discussions and for advising him to contact
F.~Loray. It is a pleasure for us to warmly thank F.~Loray who kindly
explained us basic (but nevertheless very useful) aspects of
holomorphic dynamics. V.~R. deeply thanks his coauthor, and all those who encourage him, 
at a critical moment, to resume his scientific activity.

\section{The model}
\label{sec-model}

\subsection{Rank $5$ tensors and free Gaussian measure}
 In this section we follow as closely as possible the notations of \cite{Rivasseau2016aa}. 
Consider a pair of conjugate rank-5 tensor fields 
\begin{equation*}
  T_{\ntup}, \bar T_{\nbtup}, \text{ with }
\ntup = (n_1,n_2,n_3,n_4,n_{5}) \in \Z^5\text{, }\nbtup = (\bar
n_1,\bar n_2,\bar n_3,\bar n_4,\bar n_{5} ) \in \Z^5.
\end{equation*}
They belong respectively to the tensor product 
$\Htens \defi \Hilb_1 \otimes \Hilb_2 \otimes\Hilb_3\otimes\Hilb_4\otimes\Hilb_5$ and to its dual,
where each $\Hilb_{i}$ is an independent copy of  $\ell_2 (\Z)=
L_2 (U(1))$, and  the colour or strand index $i$ takes values in $\set{1,2,3,4,5}$.
By Fourier transform the field $T$ can be considered also as an ordinary scalar field  
$T  (\theta_{1},\theta_{2},\theta_{3}, \theta_4, \theta_{5})$ on the five-dimensional torus $\Torus_5 = U(1)^5$ and
$\bar T(\bar \theta_{1},\bar \theta_{2},\bar \theta_{3}, 
\bar\theta_4, \bar\theta_{5})$ is simply its complex conjugate 
 \cite{Ben-Geloun2011aa,Delepouve2014aa}. The tensor index $\ntup$ can
 be thought as the \emph{momenta} associated to the positions
 $\tuple{\theta}$.

 Throughout this paper, we always use bold characters to denote tuples
 of at least two variables.\\

We introduce the normalized Gaussian measure
\begin{equation*}
d\mu_{C}(T, \bar T) \defi \left(\prod_{\ntup, \nbtup}
  \frac{dT_{\ntup} d\bar T_{\nbtup}}{2i\pi} \right) \det(C^{-1}) \ 
e^{-\sum_{\ntup,\nbtup} T_{\ntup} C_{\ntup,\nbtup}^{-1} \bar T_{\nbtup}}
\end{equation*}
where the covariance $C$ is
\begin{align}\label{eq-propa}
C_{\ntup,\nbtup}  &= \delta_{\ntup,\nbtup}\, C(\ntup),& C (\ntup) &= \frac{1}{ \ntup^{2}+ m^2},&
 \ntup^{2}&\defi n_{1}^2+n_2^2+n_3^2 +n_4^2 +n_5^2.
\end{align}
This defines the \emph{free} tensor fields as random distributions
on $\Z^{5}$, namely on the dual of rapidly decreasing sequences on
$\Z^{5}$. But as we are interested in interacting tensor fields, we need to regularise the free measure.

\subsection{Ultraviolet cutoff}
\label{UV-cutoff}

In practice we want to restrict the index $\ntup$ to lie in a finite set rather than 
$\mathbb{Z}^5$ in order to have a well-defined
proper (finite dimensional) tensor model. This restriction is an ultraviolet cutoff in quantum field theory language.

A colour-factorized ultraviolet cutoff would restrict all previous sums over tensor indices 
to lie in $[-N, N]$. However it is not well adapted to the rotation invariant propagator of \eqref{eq-propa} below,
nor very convenient for multi-slice analysis as in \cite{Gurau2014ab}. 
Therefore we introduce a rotation invariant cutoff but in contrast
with \cite{Rivasseau2016aa} it will be smooth.

Let $a,\epsilon$ be two positive numbers such that $\epsilon<a$. Let $\chi_{\epsilon}$
be a smooth positive function with support $[-\epsilon,\epsilon]$. We
denote by $\indic_{[-a,a]}$ the indicator function of $[-a,a]$. In
order to prepare for multiscale analysis (see
\cref{sec-wave-funct-renorm}), we fix an integer $M>1$ (as ratio of a
geometric progression $M^j$) and choose a large integer $\jm$. Our
ultraviolet cutoff is defined as
\begin{align*}
  \kappa_{\jm}(\ntup^{2})&\defi\kappa (M^{-2\jm}\ntup^{2}),\quad\kappa(\ntup^{2})\defi\indic_{[-a,a]}\star\chi_{\epsilon}(\ntup^{2}).\\
  \intertext{It is smooth, positive, compactly supported, and satisfies}
  \kappa_{\jm}(\ntup^{2})&=
          \begin{cases}
            0&\text{if $\ntup^{2}>(a+\epsilon)M^{2\jm}$,}\\
            1&\text{if $0\les\ntup^{2}\les (a-\epsilon)M^{2\jm}$.}
          \end{cases}
\end{align*}
It is convenient to choose $a=5/2$ and $\epsilon=3/2$ so that the UV cutoff $\kappa$ effectively restricts each colour index to lie
in $[-N,N]$ with
\begin{equation*}
N\defi \floor{(a+\epsilon)^{1/2}M^{\jm}}=2 M^{\jm}.
\end{equation*}

The normalized bare Gaussian measure with cutoff $\jm$ is
\begin{equation*}
d\mu_{C_b}(T, \bar T) \defi \del[3]{\prod_{\ntup, \nbtup}
  \frac{dT_{\ntup} d\bar T_{\nbtup}}{2i\pi}} \det(C_b^{-1}) \ 
e^{-\sum_{\ntup,\nbtup} T_{\ntup} C_{b;\ntup,\nbtup}^{-1} \bar T_{\nbtup}}
\end{equation*}
where the bare covariance $C_b$ is, up to a bare field strength parameter $Z_b$, the inverse of the Laplacian on $\Torus_5$ with momentum cutoff $\jm$ plus a bare mass term
\begin{align*}
C_{b;\, \ntup,\nbtup}  &= \delta_{\ntup\nbtup} \, \kappa_{\jm}(\ntup^{2}) C_b(\ntup),& C_b (\ntup) &= \frac{1}{Z_b} \frac{1}{ \ntup^{2}+ m_b^2},&
 \ntup^{2}&\defi n_{1}^2+n_2^2+n_3^2 +n_4^2 +n_5^2.
\end{align*}
A random tensor $T$ distributed according to the measure $\mu_{C_{b}}$
is almost surely a smooth function on $U(1)^{5}$.

\subsection{The bare model}
The generating function for the moments of the model is
\begin{equation}\label{eq-Zb}
\cZ^{\raisebox{.3\height}{\scalebox{.6}{(N)}}}_{b}(g_{b},J, \bar J)= \cN  \int e^{T\scalprod\bar J+ J\scalprod\bar T}
e^{-\frac{ g_b Z_b^2}{2} \sum_c V_c(T, \bar T)} \diff\mu_{C_b}(T, \bar T), 
\end{equation}
where the scalar product of two tensors $A\scalprod B$ means
$\sum_{\ntup}  A_{\ntup}  B_{\ntup}$, $g_b$ is the \emph{bare} coupling constant (which depends on the cutoff $N$), 
the source tensors $J$ and $\bar J$
are dual respectively to $\bar T$ and $T$ and $\cN$ is a
normalization factor. To compute correlation functions it is common to choose 
\begin{equation*}
\cN^{-1} =\int \exp\del[2]{-\tfrac{ g_b Z_b^2}{2}\sum_c V_c(T,
  \bar T)} \diff\mu_{C_b}(T, \bar T),
\end{equation*}
which is the sum of all vacuum
bare amplitudes. However following the constructive tradition, we shall limit $\cN$ to be the exponential of the (infinite) sum of the \emph{divergent} connected vacuum amplitudes. Remark the $Z_b^2$ scaling factor
multiplying $g_b$ in \eqref{eq-Zb}.\\

To make the interaction $\sum_c V_c(T, \bar T)$ in \cref{eq-Zb} explicit,
we recall first some notation. $\Tr$, $\Itens$ and $\scalprodtens$
mean respectively the trace, the identity and the scalar product on
$\Htens$. $\Itens_{c}$ is the identity on $\Hilb_{c}$, $\Tr[c]$ is the trace on $\Hilb_c$
and $\scalprodtens_{c}$ the scalar product restricted to $\Hilb_{c}$. 
The notation $\hat c$ means ``every color except $c$''. For instance, 
$\Hilb_{\hat c}$ means $\bigotimes_{c' \ne c}  \Hilb_{c'}$,
$\Itens_{\hat c}$ is the identity on the tensor product $\cH_{\hat
  c}$, $\Tr_{\hat c}$ is the partial trace over $\Hilb_{\hat c}$ and
$\scalprodtens_{\hat c}$ the scalar product restricted to $\cH_{\hat c}$.

$T$ and $\bar T$ can be considered both as vectors in $\Htens$ or as diagonal 
(in the momentum basis) operators acting on $\Htens$, with eigenvalues $T_{\ntup}$ and $\bar T_{\nbtup}$. An important quantity in \glstext{melonic} tensor models
is the partial trace $\Tr[\hat c][T \bar T][0]$, which we can also
identify with the partial product $\scalprodtens[T][\bar T]_{\hat
  c}$. It is a (in general non-diagonal) operator in $\Hilb_c$ with matrix elements in the momentum basis
\begin{equation*}
\scalprodtens[T][\bar T]_{\hat c}(n_c, \bar n_c)  = \Tr[\hat c][T \bar T][1](n_c, \bar n_c) 
= \Big[\prod_{c'\neq c}\Big(\sum_{n_{c'}, \bar n_{c'}} \delta_{n_{c'} \bar n_{c'}}\Big)\Big]T_{\ntup} \bar T_{\nbtup}.
\end{equation*}
The main new feature of tensor models compared to ordinary field theories 
is the non-local form of their interaction, which is chosen invariant
under independent unitary transformations on each color index. 
In this paper we consider only
the quartic \glstext{melonic} interaction {\cite{Delepouve2014ab}}, which is a sum over colors 
$\sum_{c=1}^5 V_c(T, \bar T) $ where
\begin{align*}
V_c(T, \bar T) &= \Tr[c][(\Tr[\hat c][T \bar T][0])^2]\\
  &=\sum_{\substack{n_{c},\bar n_{c},\\m_{c},\bar m_{c}}} \Big(\sum_{\ntup_{\hat
      c},\nbtup_{\hat c}} T_{\ntup}\bar T_{\nbtup}\,\delta_{\ntup_{\hat c} \nbtup_{\hat c}} \Big) \delta_{n_c \bar m_c} 
  \delta_{m_c \bar n_c}\Big(\sum_{\mtup_{\hat c}, \mbtup_{\hat c}} T_{\mtup}\bar T_{\mbtup}\,\delta_{\mtup_{\hat c} \mbtup_{\hat c}}  \Big)\nonumber .
\end{align*}

This model is globally symmetric under colour permutations. It is just renormalizable like ordinary $\phi^4_4$
but unlike ordinary $\phi^4_4$ it is asymptotically free and using
this crucial difference, we aim, in a future work, at making rigorous
sense of it.\\

Mainly in order to prepare for the constructive study of the
$T^{4}_{5}$ model, we present here its \gls{intermediate} {\cite{Gurau2013ac}}. We put $g_b \fide \lambda_b^2$ and decompose the five interactions $V_{c}$ in \cref{eq-Zb} by introducing five intermediate Hermitian $N\times N$ matrix\footnote{The indices of $\sigma$
cannot be bigger than the maximal value $N$ of each tensor index.}  fields $\sigma^{\transpose}_{c}$ acting on $\cH_c$ (here the superscript $\transpose$ refers to transposition) and dual to $\Tr[\hat c][T \bar T][0]$, in the following way
\begin{equation*}
  e^{-\frac{ (\lambda_bZ_b)^2}{2} V_{c}(T, \bar T) }= \int e^{i\lambda_b Z_b
    \Tr[c][\del{\Tr[\hat c][T \bar T]}\sigma^{\transpose}_{c}][1]}\diff\nu(\sigma^{\transpose}_{c}),
\end{equation*}
where $\diff*\nu$ is the usual GUE measure, that is
$\diff*\nu(\sigma^{\transpose}_{c})=\diff*\nu(\sigma_{c})$ is the normalized Gaussian independently
identically distributed measure of covariance $1$ on each coefficient of the Hermitian matrix $\sigma_{c}$. It is convenient to consider $C_{b}$ as a (diagonal) operator acting on
$\Htens$, and to define in this space the operator
\begin{equation*}
  \sigma\defi \sum_c   \sigma_{c} \otimes \Itens_{\hat c} =
\sigma_{1}\otimes\Itens_2\otimes\Itens_3\otimes\Itens_4\otimes\Itens_5+
\dotsm + \Itens_1\otimes\Itens_2\otimes\Itens_3\otimes\Itens_4\otimes\sigma_{5}.
\end{equation*}
Performing the now Gaussian integration over $T$ and $\bar T$ yields 
\begin{align}
\sbe{\cZ}{N}_{b}(g,J, \bar J)&= \cN \iint e^{T\scalprod\bar J+ J\scalprod\bar
   T} e^{i\lambda_b Z_b
    \Tr[c][\del{\Tr[\hat c][T \bar T]}\sigma^{\transpose}_{c}][1]} \diff\mu_{C_b}(T,\bar T)\prod_c \diff*\nu(\sigma_{c})\nonumber\\
 &= \cN \int e^{\scalprodtens[\bar J][R(\sigma)C_{b}J]-\Tr\log(\Itens-i\lambda_{b}Z_bC_{b}\sigma)}\diff\nu(\sigma)\label{eq-Zbsigma}
\end{align}
where $\diff*\nu(\sigma)\defi \prod_c  \diff*\nu(\sigma^c)$,
and $R$ is the \gls{resolvent} operator on $\Htens$
\begin{equation*}
R(\sigma)  \defi \frac{1}{\Itens -i \lambda_b Z_b C_b\sigma }.
\end{equation*}

\section{Feynman graphs}
\label{sec-feynman-graphs}

Perturbative expansions in quantum field theory are indexed by
graphs called \glsplural{feynmangraph}. Their properties reflect
analytical aspects of the action functional. Here we will deal with three different graphical
notions.

\subsection{Tensor graphs}
\label{sec-tensor-graphs}

The first one corresponds to the Feynman graphs of action
\eqref{eq-Zb} in which the fields are tensors of rank five. As for
random matrix models, Feynman graphs are stranded graphs (so-called
ribbon graphs in the matrix case) where each strand represents the
conservation of one tensor index. The corresponding Feynman rules are
recalled in \cref{fig-tensorgraphs} where an example of such a Feynman
graph is also given. Such graphs will be called \glspl{tensorgraph} in the
sequel and denoted by emphasized letters such as $\tensG$. A
tensor graph is open if it has a positive number of external
edges\footnote{What we call external edges are actually half-edges and
open graphs are in fact pre-graphs. But we do not insist on being so
precise with our terminology.} and closed
otherwise. An open graph with $n$ external edges is often called an
\gls{nptgraph}.
\begin{figure}[!htp]
  \centering
  \begin{subfigure}[c]{.45\linewidth}
    \centering
    \raisebox{-.5\height}{\includegraphics[scale=1]{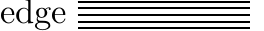}},\quad
    \raisebox{-.5\height}{\includegraphics[scale=1]{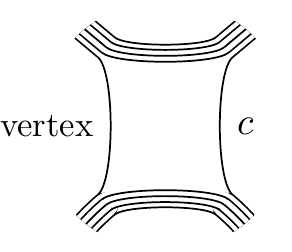}}
    \caption{Feynman rules in the tensor representation}
    \label{fig-FeynmanRulesTensorGraphs}
  \end{subfigure}\qquad
  \begin{subfigure}[c]{.5\linewidth}
    \raisebox{-.5\height}{\includegraphics[scale=1]{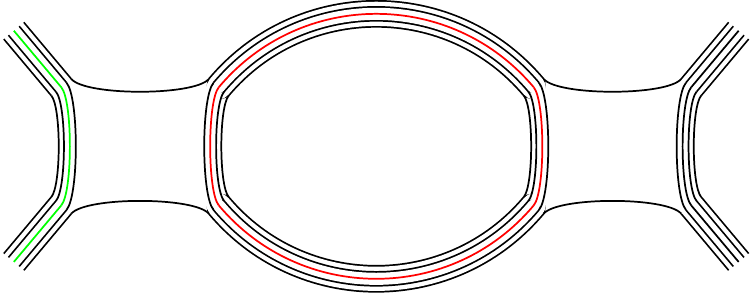}}
    \caption{An open tensor graph with $2$ vertices, $2$ internal
      and $4$ external edges. One face is drawn in red, one external path
      in green.}
    \label{fig-tensorGraph-ex}
  \end{subfigure}
  \caption{Tensor graphs}
  \label{fig-tensorgraphs}
\end{figure}
\\

The power-counting, \ie the behaviour at large $N$ of the amplitude, of
a tensor graph $\tensG$ depends on the number $F(\tensG)$ of its cycles, also called
\glspl{face}. Open tensor graphs also have non cyclic strands which we
call \glspl{extpath}, see \cref{fig-tensorGraph-ex}. It will be convenient to express the number of
faces in terms of the (reduced) \Gdegree \cite{Gurau2013ab} of
the \gls{colext} of $\tensG$. We now explain these notions.

\subsection{Coloured graphs}
\label{sec-coloured-graphs}

Strands of a tensor graph correspond to indices of the original tensor fields $T$ and $\bar
T$. Each such index is labelled by an integer from $1$ to $5$
recalling that $T$ is an element of
$\Hilb_{1}\otimes\Hilb_{2}\otimes\dotsm\otimes\Hilb_{5}$. We can then
associate bijectively to any tensor graph $\tensG$ a bipartite
$6$-regular properly edge-coloured graph $G$ called its coloured
extension. See \cref{fig-colouredgraphs} for a pictorial explanation of the bijection
as well as an example. Such edge-coloured graphs, with or without the
constraint of being $6$-regular, will be called
\glspl{colouredgraph} for simplicity and their symbols will be written
in normal font. A $(D+1)$-regular coloured graph will simply be called
$(D+1)$-coloured graph.
\begin{figure}[!htp]
  \centering
  \begin{subfigure}[c]{.45\linewidth}
    \centering
    \raisebox{-.5\height}{\includegraphics[scale=1]{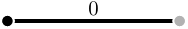}},\qquad
    \raisebox{-.5\height}{\includegraphics[scale=1]{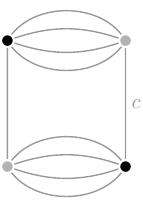}}
    \caption{Feynman rules in the coloured representation}
    \label{fig-FeynmanRulesColouredGraphs}
  \end{subfigure}\qquad
  \begin{subfigure}[c]{.5\linewidth}
    \centering
    \raisebox{-.5\height}{\includegraphics[scale=1]{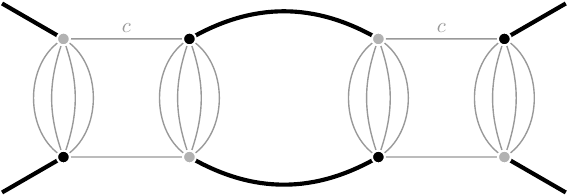}}
    \caption{The $6$-coloured graph which is in bijection with the
      tensor graph of \cref{fig-tensorGraph-ex}.}
    \label{fig-colouredGraph-ex}
  \end{subfigure}
  \caption{Coloured graphs}
  \label{fig-colouredgraphs}
\end{figure}

We will need several different notions associated to coloured
graphs. The coloured extension of a closed (\resp open) tensor graph will also be
considered closed (\resp open). In this work, edges of coloured graphs
will bear a ``colour'' in $[5]\defi\set{0,1,\dotsc,5}$. We will also
write $[5]^{*}$ for the set $\set{1,2,\dotsc,5}$. Let $G$ be a
coloured graph. We let $\gls{colset}[(G)]$ be the set of colours labelling at
least one edge of $G$. Let $c$ be an element of $[5]$. We will often
write $\hat c$ for $[5]\setminus\set{c}$. We denote by $E_{c}(G)$
the set of edges of $G$ of colour $c$. The elements of $E_{c}(G)$ are
the $c$-edges of $G$. If $\cC$ is a subset of $[5]$,
we denote $\bigcup_{c\in\cC}E_{c}(G)$ by $E_{\cC}(G)$. Let $E'$ be any subset of the
edges of $G$, we let $G[E']$ be the spanning subgraph of $G$ induced
by the edges in $E'$: the vertex-set of $G[E']$ is the same as the one
of $G$, and the edge-set of $G[E']$ is $E'$.

Certain (coloured) subgraphs of coloured graphs play a particularly important
role. We let again $\cC$ be a subset of $[5]$. A $\cC$-\gls{bubble} is a connected component of $G[E_{\cC}]$. Let $n$ be an
element of $\set{0,1,\dotsc,6}$. An $n$-bubble $B$ is a bubble such
that $\colset(B)$ has cardinality $n$. A $2$-bubble of a \emph{closed}
coloured graph is therefore a cycle whose edges bear two alternating
colours. Cyclic $2$-bubbles of $G$ whose colour set belong to
$\set{\set{0,i},\ i\in[5]^{*}}$ correspond to the faces of the
corresponding tensor graph $\tensG$. By extension, cyclic $2$-bubbles are often also called
faces and their number denoted $F(G)$. The number of
$\set{0,c}$-bubbles will be written $F_{0c}$ and we define $F_{0}(G)\defi\sum_{c\in[5]^{*}}F_{0c}$ so that $F_{0}(G)=F(\tensG)$. Similarly, we denote by $F_{\emptyset}(G)$ the total
number of faces of $G$, both colours of which are different from
$0$. Non cyclic $2$-bubbles of $G$ represent the external paths of
$\tensG$. The interaction vertices of a tensor graph $\tensG$ are in bijection with the $\hat 0$-bubbles of its coloured extension $G$.

The (reduced) \gls{Gdegree} $\rGdeg(G)$ of a \emph{closed}
$(D+1)$-coloured graph\footnote{In this case, by convention, the set of
  colours of $G$ is $[D]$.} $G$ is
defined as follows \cite{Gurau2013ab}:
\begin{equation*}
  \rGdeg(G)\defi\tfrac 14D(D-1)V(G)+D\,C(G)-F(G)
\end{equation*}
where $V(G)$ is the number of vertices of $G$ and $C(G)$ its
number of connected components. It is a
non negative integer. One can indeed show that it is the sum of the
genera of some maps associated to $G$ \cite{Gurau2012aa}. Let
$\cycperm_{[D]}$ be the set of cyclic permutations of $[D]$
and $\tau$ be such a permutation. Let $\jack_{\tau}(G)$
($\jack_{\tau}$ if the context is clear) be the
map whose underlying graph is $G$ and whose cyclic ordering of the
edges around vertices is given by $\tau$. Such maps are called \glspl{jacket}
in the tensor field literature \cite{Ben-Geloun2010aa}, see \cref{fig-jacket-ex} for an example. Then, we have
\begin{equation*}
  \rGdeg(G)=\frac{1}{(D-1)!}\sum_{\tau\in\cycperm_{[D]}}g_{\jack_{\tau}}.
\end{equation*}
\begin{figure}[!htp]
  \centering
  \raisebox{-.5\height}{\includegraphics[scale=1]{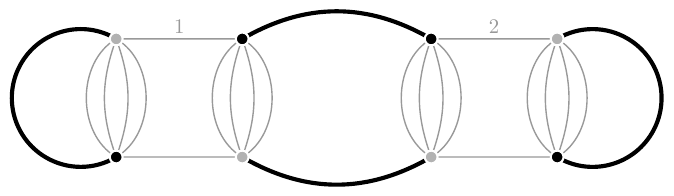}}\hspace{2cm}
  \raisebox{-.5\height}{\includegraphics[scale=1]{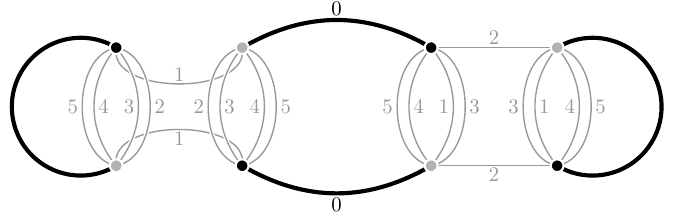}}
  \caption{The jacket $\jack_{(023145)}$ (right) of the left coloured graph}
  \label{fig-jacket-ex}
\end{figure}

In order to classify the divergent graphs of tensor models, one needs
an extension of the \Gdegree to \emph{open} coloured graphs and
the notion of boundary graph of a coloured graph. To start with, we
need a slightly generalised version of jacket\footnote{Such a map is
  called a pinched jacket in \cite{Ben-Geloun2011aa}.}.
\begin{defn}[Jacket of a possibly open coloured graph]
  \label{def-generalized-jacket}
  Let $G$ be a $(D+1)$-coloured graph, open or closed. Let $\tau$ be a
  cyclic permutation of $[D]$. The jacket $\jack_{\tau}$ of
  $G$, with respect to $\tau$, is the map built in the following way:
  \begin{enumerate}
  \item consider $G$ as a graph,
  \item fix the cyclic ordering of its edges around its vertices
    according to $\tau$,
  \item delete the half- (or external) edges of this map. 
  \end{enumerate}
\end{defn}
Before defining a natural version of the \Gdegree for possibly open
coloured graphs, we need to remind the reader of the notion of a
\gls{bdrygraph}. It is well-known that open (\resp closed) $(D+1)$-coloured graphs
encode (triangulated) $D$-dimensional piecewise linear normal pseudo-manifolds with
(\resp without) boundary \cite{Pezzana1974aa,Gurau2012ab}. Let $G$ be
such an open coloured graph. The boundary of its dual pseudo-manifolds is triangulated by a
complex dual to the boundary graph $\gls{dG}$ of $G$. $\dG$ is a
$D$-coloured graph defined as follows: its vertex-set is the
set of external edges of $G$. Its edge-set is the set of non cyclic
$2$-bubbles of $G$. \Cref{fig-boundary-ex} provides examples of
boundary graphs. From the bijection between coloured graphs and tensor
graphs, $\partial G$ defines the boundary (tensor) graph
$\partial\tensG$ of $\tensG$.
\begin{figure}[!htp]
  \centering
  \begin{tikzpicture}
    \def\u{6};
    \node (un) at (0,0) [shape=rectangle]{\includegraphics[scale=.8]{fondamental4}};
    \node (dun) at (\u,0)
    [shape=rectangle]{\includegraphics[scale=.8]{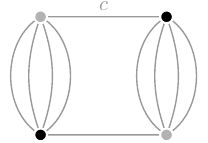}};
    \node (deux) at (0,-.4*\u) [shape=rectangle]{\includegraphics[scale=.8]{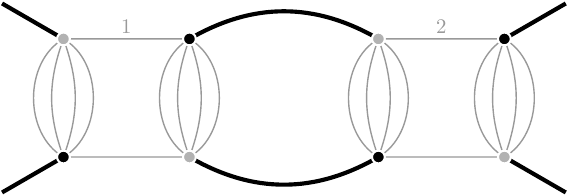}};
    \node (ddeux) at (\u,-.4*\u)
    [shape=rectangle]{\includegraphics[scale=.8]{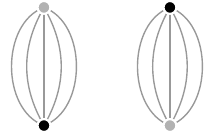}};

    \draw [thick,->] (un.east)--(dun.west) node [above,text
    width=3cm,text centered,midway] {$\partial$};
    \draw [thick,->] (deux.east)--(ddeux.west) node [above,text width=3cm,text centered,midway] {$\partial$};
  \end{tikzpicture}
  \caption{Boundary graphs}
  \label{fig-boundary-ex}
\end{figure}

\begin{defn}[Gurau degree of a possibly open coloured graph]
  \label{def-generalized-Gdegree}
  The \Gdegree of a possibly open $(D+1)$-coloured graph is defined as
  \begin{equation*}
    \rGdeg(G)\defi \frac{1}{(D-1)!}\Big(\sum_{\tau\in\cycperm_{[D]}}g_{\jack_{\tau}(G)}-\sum_{\tau\in\cycperm_{[D]^{*}}}g_{\jack_{\tau}(\dG)}\Big).
\end{equation*}
\end{defn}
If $G$ is closed, $\dG$ is empty and this equation reduces to the
usual \Gdegree for closed graphs. This definition essentially
originates from \cite{Ben-Geloun2011aa}. The point of interest for us is that, in the
case where all the external edges of $G$ bear the same colour, it can
be shown that $\rGdeg(G)$ is a measure of the number of faces of $G$:
\begin{lemma}\label{thm-Gdegree-measures-faces}
  Let $G$ be a possibly open $(D+1)$-coloured graph. Its \Gdegree,
  given by \cref{def-generalized-Gdegree}, follows
  \begin{equation*}
    \rGdeg(G)=\tfrac 14D(D-1)V(G)+D\,C(G)-F(G)-C(\dG)-\tfrac 12(D-1)E(G)
  \end{equation*}
  where $E(G)$ is the number of external edges of $G$.
\end{lemma}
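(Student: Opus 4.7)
The plan is to unfold Definition \ref{def-generalized-Gdegree} by applying Euler's formula to each jacket and then performing a double-counting sum over cyclic permutations.

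For $\tau\in\cycperm_{[D]}$, the pinched jacket $\jack_\tau(G)$ is a combinatorial map whose vertex set equals $V(G)$, whose edge set is the set of internal edges $I(G)$ of $G$ (external half-edges being deleted), and whose number of connected components equals $C(G)$ (pendant half-edges cannot disconnect). Euler's formula thus gives $g_{\jack_\tau(G)} = C(G) - \tfrac{1}{2}(V(G) - I(G) + F(\jack_\tau(G)))$, and the half-edge identity at each vertex reads $2I(G) + E(G) = (D+1)V(G)$. The subtle point is the face count: cyclic $\{c_i,c_{i+1}\}$-bubbles of $G$ indexed by an adjacent pair in $\tau$ each contribute one face, while the non-cyclic bicoloured strands of $G$ close up through the pinching of the external legs, producing additional ``pinched'' faces whose total I denote $\Delta_\tau$.

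The combinatorial core of the computation is the elementary identity that every unordered pair $\{a,b\}\subset[D]$ appears as an adjacent pair in exactly $2(D-1)!$ of the $D!$ cyclic permutations of $[D]$, and every unordered pair $\{a,b\}\subset[D]^*$ appears as an adjacent pair in exactly $2(D-2)!$ of the $(D-1)!$ cyclic permutations of $[D]^*$. Applying the first identity to the cyclic face contributions yields $\sum_\tau F^{\mathrm{cyc}}(\jack_\tau(G)) = 2(D-1)!\,F(G)$; together with $\sum_\tau V(\jack_\tau)=D!\,V(G)$, $\sum_\tau I(\jack_\tau)=D!\,I(G)$ and $\sum_\tau C(\jack_\tau)=D!\,C(G)$, division by $(D-1)!$ gives
\[\tfrac{1}{(D-1)!}\sum_\tau g_{\jack_\tau(G)} = \tfrac{D(D-1)}{4}V(G) + D\,C(G) - F(G) - \tfrac{D}{4}E(G) - \tfrac{1}{2(D-1)!}\sum_\tau \Delta_\tau.\]
An analogous (and simpler) computation on the closed $D$-coloured graph $\partial G$, using $V(\partial G)=E(G)$, $|E(\partial G)|=\tfrac{D}{2}E(G)$ and the second counting identity, produces
\[\tfrac{1}{(D-1)!}\sum_{\tau'\in\cycperm_{[D]^*}} g_{\jack_{\tau'}(\partial G)} = C(\partial G) + \tfrac{D-2}{4}E(G) - \tfrac{F(\partial G)}{D-1}.\]

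Subtracting the second expression from the first per Definition \ref{def-generalized-Gdegree} and using $-\tfrac{D}{4}E(G) - \tfrac{D-2}{4}E(G) = -\tfrac{D-1}{2}E(G)$, the claim reduces to the single identity $\sum_\tau \Delta_\tau = 2(D-2)!\,F(\partial G)$. The main obstacle is establishing precisely this equality: it asserts that the total pinched-face contribution summed over all jackets of $G$ matches, up to the combinatorial factor $2(D-2)!$, the number of cyclic bicoloured faces of $\partial G$. Its proof amounts to a careful traversal of the pinched jackets: each pinched face in $\jack_\tau(G)$ corresponds to a cyclic sequence of non-cyclic bicoloured strands of $G$ concatenated through the boundary graph, which is exactly a cyclic bicoloured face of $\partial G$ whose two colours coincide with the adjacent pair of $\tau$; the factor $2(D-2)!$ then arises from the number of cyclic permutations of $[D]$ containing a given (non-zero) adjacent pair of colours, divided by the $(D-1)$ already absorbed in the normalisation $1/(D-1)!$ of Definition \ref{def-generalized-Gdegree}. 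This is exactly the rationale for the particular linear combination defining $\rGdeg$ on possibly open graphs.
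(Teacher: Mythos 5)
Your proof has the same skeleton as the paper's: apply Euler's formula on each jacket, split the jacket faces into those inherited from $2$-bubbles of $G$ and those created by pinching the external legs, double-count over cyclic permutations, treat $\partial G$ analogously, and subtract. The algebra is all correct, and you rightly isolate the identity $\sum_\tau\Delta_\tau=2(D-2)!\,F(\partial G)$ as the crux.

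Your justification of that identity, however, is not correct. You assert that a pinched face of colours $\{i,j\}$ (with $i,j\neq 0$) appears in $\jack_\tau(G)$ precisely when $\{i,j\}$ is an adjacent pair in $\tau$, and then obtain $2(D-2)!$ from the pair count $2(D-1)!$ by ``dividing by the $(D-1)$ already absorbed in the normalisation.'' That division is unjustified: the prefactor $1/(D-1)!$ in \cref{def-generalized-Gdegree} is applied uniformly to the whole sum over jackets and cannot rescale one class of terms selectively. More importantly, the adjacency criterion itself is wrong. A pinched face of colours $\{i,j\}$ is a cyclic concatenation of non-cyclic $\{0,i\}$- and $\{0,j\}$-strands of $G$, glued across its external legs; those legs all carry the colour $0$ (the standing hypothesis that all external edges share one colour, which your write-up leaves implicit). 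Such a concatenation closes up into a face of $\jack_\tau$ precisely when $\tau$ contains the consecutive triple $i0j$ or $j0i$, that is, when $i$ and $j$ are the two neighbours of $0$ in $\tau$ --- not merely when $i$ and $j$ are adjacent to each other. The factor $2(D-2)!$ then follows directly: gluing the ordered block $i0j$ into a cyclic arrangement of the $D+1$ colours of $[D]$ leaves $(D-2)!$ cyclic arrangements, and the reversed block $j0i$ contributes another $(D-2)!$. With this repair your argument is complete; as written, the numerical coincidence $2(D-1)!/(D-1)=2(D-2)!$ masks the gap.
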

\begin{proof}
  Let $\tau$ be a cyclic permutation of $[D]$ and $\jack_{\tau}(G)$ be
  the corresponding jacket of $G$. By Euler relation,
  \begin{equation*}
    g_{\jack_{\tau}(G)}=\tfrac 12\del{2C(G)-F(\jack_{\tau})+e(G)-V(G)}
  \end{equation*}
  where $e(G)$ denotes the number of internal edges of $G$. Faces of
  $\jack_{\tau}$ can be divided into two parts: the ones which are
  faces of $G$, the other ones which are not. The latter will be
  called external and their number will be denoted by
  $F_{\text{ext}}(\jack_{\tau})$. What are these external faces
  exactly? Let us look at the upper left graph of
  \cref{fig-boundary-ex}, considered as a map. Recall that in the
  definition of a jacket, we removed external edges. If one does so
  for this map, it will contain an external face which goes all around
  it. This face is not a face of $G$ because it is bordered by edges
  of three different colours, $0$, $i$ and $j$ with $i,j\neq 0$. It
  will be convenient to define external faces of $G$. Non-cyclic
  $2$-bubbles $b$ of $G$ are bordered by two external vertices, namely
  its two vertices incident with external edges. If the edges of $b$ bear
  colours $0$ and $i$, we call $b$ an external path of $G$ of colour $i$. An
  external face of $G$ of colour $ij$ is then defined as a cyclic and
  alternating sequence of adjacent external paths of colour $i$ and
  $j$ respectively. The important point to notice is that the external
  faces of $G$ are in bijection with the faces of $\partial G$.

  External faces of $G$ of colour $ij$ are faces of $\jack_{\tau}$ \ifft $\tau$ contains the sequence
  $i0j$ or $j0i$. Then, a given external face of $G$ belongs to
  exactly $2(D-2)!$ jackets. Each face of $G$ belongs to exactly $2(D-1)!$ jackets so that
  \begin{align*}
    \frac{1}{(D-1)!}\sum_{\tau\in\cycperm_{[D]}}g_{\jack_{\tau}(G)}&=\tfrac
                                                                  1{2(D-1)!}\big(2(D!)C(G)-2(D-1)!F(G)-2(D-2)!F_{\text{ext}}(G)\\
    &\phantom{=}+D!\del{e(G)-V(G)}\big)\\
                                                &=D\,C(G)-F(G)-\tfrac 1{D-1}F_{\text{ext}}(G)+\tfrac{D(D-1)}{4}V(G)-\tfrac{D}{4}E(G),
  \end{align*}
  where we used that the total number of jackets of $G$ is $D!$ and
  $2e(G)+E(G)=(D+1)V(G)$. Similarly, using that the total number of
  jackets of $\partial G$ is $(D-1)!$, that $2e(\partial
  G)=DV(\partial G)$ and that $V(\partial G)=E(G)$, we have
    \begin{align*}
    \frac{1}{(D-1)!}\sum_{\tau\in\cycperm_{[D]^{*}}}g_{\jack_{\tau}(\partial
      G)}&=\tfrac
                                                                  1{2(D-1)!}\big(2(D-1)!C(\partial
           G)-2(D-2)!F_{\text{ext}}(G)\\
      &\phantom{=}+(D-1)!\del{e(\partial G)-V(\partial G)}\big)\\
                                                &=C(\partial G)-\tfrac
                                                  1{D-1}F_{\text{ext}}(G)+\tfrac{D-2}4
                                                  E(G).
    \end{align*}
    This concludes the proof.
\end{proof}
Coloured graphs of vanishing degree are said to be \gls{melonic}. They form the
dominant family of the $1/N$-expansion of coloured tensor models
\cite{Gurau2012aa}.

\subsection{Intermediate field maps}
\label{sec-interm-field-maps}

The third graphical notion we will deal with corresponds to the
Feynman graphs of action \eqref{eq-Zbsigma} \viz Feynman graphs of the
\ifrt of our model. As the \ifrt is a multi-matrix model, its Feynman
graphs are ribbon graphs or maps. As each field $\sigma_{c}$ bears a
colour index $c$ (and the covariance is diagonal in this colour
space), the edges of these maps bear a colour too. The $\Tr\log$
interaction term implies that there is no constraint on the degrees of
the vertices of these maps nor on the properness of their
edge-colouring. Such maps will be called \glspl{colmap}. As for
coloured graphs, we let $\colset(\ifG)$ be the set of colours
labelling at least one edge of $\ifG$.

There is a bijection between the Feynman maps of the \ifrt and the
Feynman graphs of the original tensorial action \eqref{eq-Zb}. A
precise description of this bijection can be found in \cite{Bonzom2015ab}. Let us
remind the reader of its most salient features. Firstly, note that we
will in fact explain a bijection between coloured graphs and coloured
maps. Let $G$ be a 
$6$-coloured graph of the $T^{4}_{5}$ model and let
$\gls{ifG}$ be the corresponding coloured map. In each $\hat
0$-bubble of $G$, there are two sets of four parallel edges. Each set
will be called a \gls{partnerlink}.

Edges of $\ifG$ are in
bijection with the $\hat 0$-bubbles (or interaction vertices) of
$G$. Each such bubble has a distinguished colour, namely the colour
common to the two edges which do not belong to a partner link. We
label the corresponding edge of $\ifG$ with it. Partner links of $G$ are in
bijection with half-edges of $\ifG$. Let us now describe the vertices of $\ifG$. They form cycles of
half-edges. But there is a subtlety due to external edges of $G$. Each maximal alternating sequence of adjacent
$0$-edges and partner links in $G$ form either a cycle or a (non cyclic) path
in case of external ($0$-)edges. In any case, we represent such a
sequence as a vertex in $\ifG$. If a sequence is not cyclic, we add a
cilium, \ie a mark, to the corresponding vertex of $\ifG$. See
\cref{fig-bij-colgraphs-maps} for
an illustration of this bijection.
\begin{figure}[!htp]
  \centering
  \begin{tikzpicture}
    \matrix[row sep={.5cm}, column sep=3cm]{
      \node (ul)
      {\includegraphics[scale=.7,align=c]{QuarticVertex90}};&\node
      (ur) {\includegraphics[scale=1,align=c]{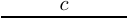}};\\
      \node (ml)
      {\includegraphics[scale=.7,align=c]{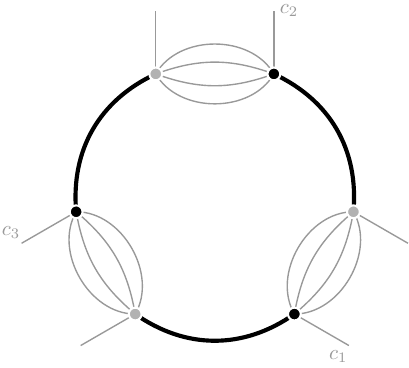}};&\node (mr) {\includegraphics[scale=1,align=c]{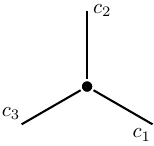}};\\
      \node (bl)
      {\includegraphics[scale=.7,align=c]{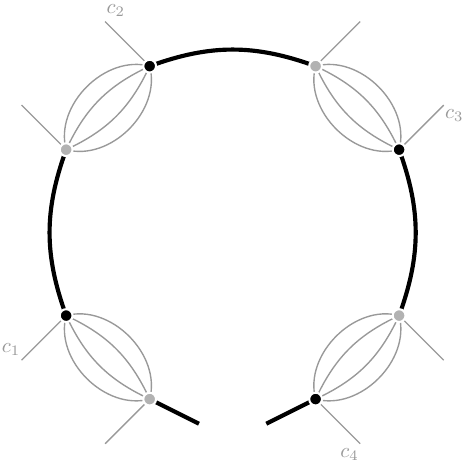}};&\node
      (br) {\includegraphics[scale=1,align=c]{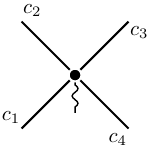}};\\
  };
  \draw[->, shorten >=.5cm, shorten <=.5cm] (ml) -- (mr);
  \draw[->, shorten >=.5cm, shorten <=.5cm] (ul) -- (ur);
  \draw[->, shorten >=.5cm, shorten <=.5cm] (bl) -- (br);
  \end{tikzpicture}
  \caption{Bijection between coloured graphs and coloured maps --
    Feynman rules}
  \label{fig-bij-colgraphs-maps}
\end{figure}

In the sequel, we will use (at least) two features of this bijection
between coloured graphs of our model and coloured maps:
\begin{enumerate}
\item $0$-edges correspond to corners of the coloured map \cite{Bonzom2015ab},
\item melonic coloured graphs are in bijection with coloured plane
  trees \cite{Dartois2013aa,Bonzom2015ab}.
\end{enumerate}

\section{The divergent melonic sector}
\label{sec-diverg-melon-sect}

\subsection{Divergent graphs}
In order to exploit the \emph{divergence} degree 
\begin{equation*}
\ddeg(\tensG)=-2L(\tensG)+F(\tensG)
\end{equation*}
of a graph $\tensG$, where $L(\tensG)$ and $F(\tensG)$ are respectively the
 number of edges and faces of $\tensG$, we need to compute the number of
 its faces. This quantity is given by
\begin{equation}
F(\tensG)=4(V(\tensG) + 1)-2E(\tensG) -\del{C(\partial\tensG)-1}-\rGdeg(G), \label{eq-tensFaces}
\end{equation}
where $V(\tensG)$ and $E(\tensG)$  are respectively the number of vertices and the
number of external legs  of $\tensG$. Indeed, as
$F(\tensG)=F_{0}(G)=F(G)-F_{\emptyset}(G)$, we can use
\cref{thm-Gdegree-measures-faces} and the specific form of the quartic
melonic interaction vertices (to compute
$F_{\emptyset}(G)=(D-1)^{2}V(\tensG)$). To get \cref{eq-tensFaces} we
also used $V(G)=4V(\tensG)$ and set $D=5$. The original proof of
\cref{eq-tensFaces} can be found in \cite{Ousmane-Samary2012ab}.

After substituting the combinatorial relation $2L+E=4V$, the divergence degree of $\tensG$ can be written as
\begin{equation}\label{eq-divdegree}
\omega(\tensG)= 4-E(\tensG) - \del{C(\partial\tensG)-1} - \rGdeg(G).
\end{equation}

\begin{lemma}[Superficially divergent graphs]\label{thm-divgraphs}
  The superficially divergent graphs, \ie the graphs $\tensG$ such that
  $\ddeg(\tensG)\ges 0$, all belong to one of the cases listed in
  \cref{tab-supdivgraphs}. Moreover, in the intermediate field representation, 
  \begin{itemize}[leftmargin=*]
  \item divergent four-point graphs are trees such that the unique
    path between their two cilia is monochrome,
  \item the closed superficially divergent
    graphs are
    \begin{itemize}
      \item plane trees if $\ddeg*=5$,
      \item unicyclic maps if $\ddeg*=0$ or $\ddeg*=2$
    \end{itemize}
  \end{itemize}
  Finally, in the latter case, $\ddeg(\tensG)=2$ \ifft the unique
  cycle of $\ifG$ is monochrome.
  \begin{table}[!htp]
  \begin{displaymath}
    \begin{array}{lcccc||}
      E(\tensG)& C(\partial\tensG)&\rGdeg(G)& \omega(\tensG) \\
      \hline\hline
      4 & \multirow{2}{*}{1} & \multirow{2}{*}{0} & 0 \\
      \cline{1-1}\cline{4-4}
      2 &  &  & 2\\
      \hline
      \multirow{3}{*}{0} & \multirow{3}{*}{0} & 0 & 5\\
      \cline{3-4}
      && 3 & 2\\
      \cline{3-4}
      && 5 & 0\\
      \hline
      \hline
    \end{array}
  \end{displaymath}
  \caption{Characteristics of superficially divergent graphs}
  \label{tab-supdivgraphs}
\end{table}
\end{lemma}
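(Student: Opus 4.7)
The identity $\omega(\tensG)=4-E(\tensG)-(C(\partial\tensG)-1)-\rGdeg(G)$ from \cref{eq-divdegree} is the starting point. The first task is to enumerate all ways of achieving $\omega(\tensG)\ges 0$. Three facts apply: $E(\tensG)$ is even (the interaction being quartic), $C(\partial\tensG)\ges 1$ for open graphs ($=0$ by convention for closed ones), and $\rGdeg(G)\in\N$. A case split on $E$ then yields: for $E=4$, $\omega=1-C(\partial\tensG)-\rGdeg(G)\ges 0$ forces $C(\partial\tensG)=1$ and $\rGdeg(G)=0$, hence $\omega=0$; for $E=2$, the two external vertices of $\partial\tensG$ are joined by exactly one edge of each non-zero colour so $C(\partial\tensG)=1$ automatically, leaving $\omega=2-\rGdeg(G)$; and for $E=0$ (closed), $\omega=5-\rGdeg(G)$.

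The second task is to use the intermediate field representation to translate the bounds on $\rGdeg(G)$ into combinatorial conditions on $\ifG$. Two key facts must be established: (a) $\rGdeg(G)=0$ iff $\ifG$ is a coloured plane tree, which is the melonic/plane-tree bijection recalled at the end of \cref{sec-interm-field-maps}; and (b) each independent cycle of $\ifG$ contributes exactly $3$ to $\rGdeg(G)$ if the cycle is monochrome, and exactly $5$ otherwise. Granting (a) and (b), any $\ifG$ with two or more independent cycles satisfies $\rGdeg\ges 6$, so divergence forces $\ifG$ to be either a tree or unicyclic. The table then follows immediately: for $E=0$, $\omega=5$ selects plane trees, $\omega=2$ selects unicyclic maps with monochrome cycle, $\omega=0$ selects unicyclic maps with polychromatic cycle; for $E=2$, $\omega\ges 0$ combined with the achievable set $\rGdeg\in\set{0,3,5,\dotsc}$ forces $\rGdeg=0$ and $\omega=2$, with $\ifG$ a two-ciliated tree; for $E=4$, divergence similarly requires $\rGdeg=0$ and $\ifG$ to be a two-ciliated tree, and the constraint $C(\partial\tensG)=1$ on the four-point boundary graph is equivalent to the unique path between the two cilia being monochrome, since any colour change along that path would split $\partial\tensG$ into two connected components.

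The main obstacle is item (b): quantifying how much each extra cycle of $\ifG$ raises $\rGdeg(G)$. I would proceed by induction, starting from a spanning tree of $\ifG$ and adding the remaining edges one at a time; each new edge closes one independent cycle, and at the level of $G$ this operation modifies a controlled family of faces. For a monochrome cycle of colour $c$, only the faces whose colour set meets $\set{0,c}$ are affected, and a direct count through \cref{thm-Gdegree-measures-faces}, averaged over the $(D-1)!=24$ cyclic permutations entering the definition of $\rGdeg$, produces an increment of $3$. A polychromatic cycle involves additional colours and breaks further families of faces, raising the increment to exactly $5$. The absence of intermediate contributions ($1$, $2$ or $4$) reflects the specific structure of the $\hat 0$-bubbles in the $T^{4}_{5}$ model and can be verified by a finite case analysis on the possible ``types'' of bichromatic elementary cycles of $\ifG$.
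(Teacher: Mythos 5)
Your overall route matches the paper's: start from \cref{eq-divdegree}, split on $E(\tensG)\in\set{4,2,0}$ using the parity of $E$, $C(\partial\tensG)\ges 1$ for open graphs, and $\rGdeg(G)\in\N$, then pass to the intermediate field representation. The handling of $E=4$ and of the connectedness constraint (a monochrome path between cilia keeps the boundary graph connected) is exactly the paper's argument, and the $E=2$ case arrives at the same conclusion.

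Two points where you diverge from, and are weaker than, the paper. First, your claim~(b) — that \emph{each} independent cycle of $\ifG$ contributes \emph{exactly} $3$ (monochrome) or \emph{exactly} $5$ (polychrome) to $\rGdeg(G)$, with proof by induction over edges added to a spanning tree — is both stronger than what you need and, in the multi-cycle case, not actually true: $\rGdeg$ is not additive over cycles, and for maps of excess $\ges 2$ the degree is not constrained to sums of $3$'s and $5$'s. What the paper proves (and what suffices) is the weaker, local statement that adding one edge to a connected $\ifG$ adds two corners, hence two edges to $\tensG$, while adding at most one face, so $\ddeg*$ drops by at least $3$; starting from a tree ($\ddeg*=5$) this rules out excess $\ges 2$ immediately. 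For the remaining unicyclic case the paper does \emph{not} do an inductive face count: it recursively deletes degree-one non-ciliated vertices (which preserves $\rGdeg$), reduces $\ifG$ to a ring graph carrying a non-separating chain, and then cites the known degree computation for chains \cite[p.~288]{Gurau2013ab} to obtain $\rGdeg\in\set{3,5}$ according to whether the chain is unbroken. Your ``averaging over the $24$ jackets'' sketch aims at the same number but you would in effect be re-proving that cited result without the reduction step that makes it tractable.

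Second, for $E=2$ you assert an ``achievable set'' $\rGdeg\in\set{0,3,5,\dotsc}$. That set, taken at face value, is false (degree $4$ is attainable for $6$-coloured graphs). The fact actually used — and the one the paper invokes from \cite{Ben-Geloun2012ab} — is the gap result: if $\rGdeg(G)>0$ then $\rGdeg(G)\ges D-2=3$, ruling out $\rGdeg\in\set{1,2}$ and hence forcing $\rGdeg=0$ when $\omega=2-\rGdeg\ges0$. Your conclusion is correct but rests on a misstated input; you should replace the claimed achievable set by the gap inequality, with its reference.

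In short: same decomposition, same reduction to the intermediate field picture, same final table; the gap is in how you justify the bound on the excess and the exact degrees of unicyclic maps, where an appeal to the $\ges 3$ per-edge estimate plus the reduction-to-chain computation of \cite{Gurau2013ab} is both simpler and safer than the additive-increment induction you propose.
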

\begin{proof}
  \fabtextcite{Gurau2013ab} defined two very
  convenient coloured graphs we will need. The first one is a chain. Chains
  can be broken or unbroken. In our case, chains of an
  intermediate field map $\ifG$ are
  paths of the form $(e_{1},v_{1},e_{2},v_{2},\dotsc,e_{n-1},v_{n-1},e_{n})$ where the
  $e_{i}$'s are edges of $\ifG$, the $v_{i}$'s are vertices of $\ifG$
  such that for all $i$ between $1$ and $n-1$, the degree of $v_{i}$
  in $\ifG$ is two. Such a chain is unbroken if all its edges bear the
  same colour. It is broken otherwise. The second simple but very
  useful object is that of trivial coloured graphs or ring graphs. They
  consist in a single loop and no vertex. This loop bears a colour.
  In our case, this will always be the colour $0$. Ring graphs are
  melonic by convention and are represented by an isolated vertex in
  the \ifrt.

  According to \cref{eq-divdegree}, the divergence degree of a $4$-point
  graph $\tensG$ is bounded above by zero. It vanishes \ifft
  $C(\partial\tensG)=1$ and $\rGdeg(G)=0$. Divergent four-point graphs
  are thus trees with two cilia in the \ifrt. Now, recursively remove
  all degree one vertices of this tree which do not bear a cilium. One
  gets a non trivial path $\IFR{P}$ with a cilium at each end. This path has the
  same power counting as the initial tree $\ifG$. It is melonic and its
  boundary graph is connected if $\IFR P$ is monochrome, disconnected
  otherwise. Thus, according to \cref{eq-divdegree}, $\ifG$ is
  superficially divergent \ifft the unique path between its two cilia
  is monochrome.
  
  Let us now consider a
  Feynman graph $\tensG$ such that $E(\tensG)=2$. The
  divergence degree of such a graph is bounded above by two. The coloured
  extension of its boundary graph is the unique $6$-coloured graph
  with two vertices. It is thus connected \ie $C(\partial\tensG)=1$. Then
  $\ddeg(\tensG)=2\text{ \ifft}\rGdeg(G)=0$. Moreover, as proven in
  \cite{Ben-Geloun2012ab}, if $\rGdeg(G)>0$ then $\rGdeg(G)\ges D-2$
  where $D+1$ is the number of colours of $G$. In our case, $D$ equals
  five and the smallest possible positive degree is
  three. Consequently the only superficially divergent $2$-point
  graphs have vanishing degree.

  Let us finally treat the case of a closed ($E=0$) superficially
  divergent Feynman graph and work in the \ifrt. Note that the
  divergence degree of such a graph is bounded above by five. As a
  consequence, it has excess at most one. Indeed, adding an edge to a
  connected graph $\ifG$ increases the number of its corners by two
  (hence the number of edges of $\tensG$ increases by two) while the
  number of faces of $\tensG$ can at most increase by one. Thus the
  divergence degree decreases by at least three. A connected closed graph
  $\tensG$ with maximal divergence degree (five) is melonic and
  corresponds, in the \ifrt, to a tree. According to the argument
  above, a superficially divergent closed graph has an excess smaller
  or equal to one.

  Let us focus on divergent graphs $\ifG$ of excess one. They are maps
  with exactly two faces \ie maps with a unique cycle and trees
  attached to the vertices of this cycle. In order to further classify
  such divergent graphs, as in \cite{Gurau2013ab}, we
  first remove recursively all vertices of degree one. This does not change the
  degree of the graph. The result is a cycle $\IFR C$ \ie a ring graph
  into which a maximal proper chain $\IFR{Ch}$ has been inserted. According to
  \cite[p.\ $288$]{Gurau2013ab} the Gurau degree of the coloured graph $C$ is $3$
  if $\IFR C$ is monochrome (the chain $\IFR{Ch}$ is then non-separating and
  unbroken with a single resulting face). It is $5$ otherwise
  ($\IFR{Ch}$ is then a non-separating broken chain). 
\end{proof}

The $T^{4}_{5}$ model \eqref{eq-Zb} has the power counting of a just
renormalizable theory (and can be proven indeed perturbatively
renormalizable by standard methods). However the structure of
divergent subgraphs is simpler both than in ordinary $\phi^4_4$ or in
the Grosse-Wulkenhaar model \cite{GrWu04-3} and its translation-invariant renormalizable version.

Melonic graphs with zero, two and four external legs are divergent, respectively as $N^5$,
$N^2$ and $\log N$. In the sequel we will only consider 1PI (\ie one
particle-irreducible or $2$-edge connected) graphs as they represent the
only necessary renormalizations. Melonic graphs are trees in the
\ifrt. The condition that they are 1PI exactly corresponds to the ciliated vertices
being of degree one in the tree (cilia do not count). Melonic vacuum
graphs are always 1PI.

The divergent melonic graphs of the theory are obtained respectively
from the fundamental melonic graphs of \cref{fig-melonicdivergences}, by
recursively inserting the fundamental 2-point melon on any bold line, or, in the case of the four-point function,
also replacing any interaction vertex by the fundamental
4-point melon so as to create a ``melonic chain'' of arbitrary length
(see \cref{fig-chain-ex} for a chain of length two), in which all vertices must be of the same colour (otherwise the graph won't be divergent).
\begin{figure}[!htp]
  \begin{center}
    \includegraphics[align=c, scale=.8]{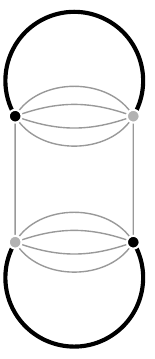} \hspace{1.5cm} \includegraphics[align=c, scale=.8]{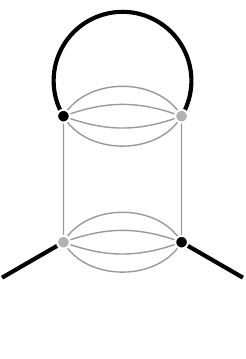}\hspace{1.5cm}
    \includegraphics[align=c, scale=.8]{fondamental4}
  \end{center}
  \caption{From left to right, the fundamental melons for the 0-, 2- and 4-point function.}
\label{fig-melonicdivergences}
\end{figure}
\begin{figure}[!htp]
  \begin{center}
    \includegraphics[scale=.8]{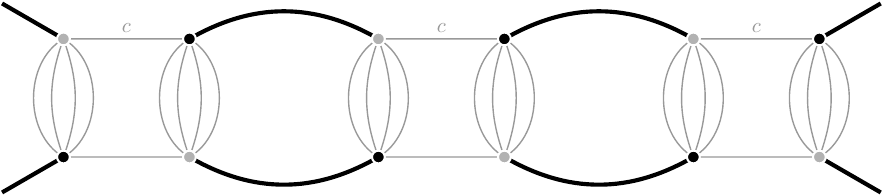}
  \end{center}
  \caption{The length-two melonic four-point chain}
  \label{fig-chain-ex}
\end{figure}

Beyond melonic approximation there is only one simple infinite family of non melonic graphs who are divergent. They are vacuum graphs diverging  either as $N^2$ or as $\log N$.
They are made of a ``necklace chain'' of arbitrary length $p\ges 1$, decorated with arbitrary
$2$-point melonic insertions. Two such necklace chains, of length one and four,
are pictured in \cref{fig-vacuumnonmelonicdivergences}. If all
couplings along the chains have same colour, the divergence is quadratic, in $N^2$. If some couplings are different, the divergence is logarithmic, in $\log N$.
\begin{figure}[!htp]
  \begin{center}
    \includegraphics[align=c]{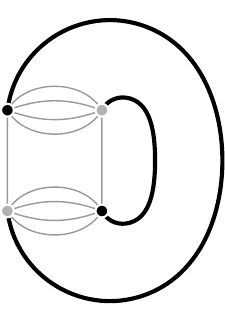}\hspace{2.5cm}\includegraphics[align=c]{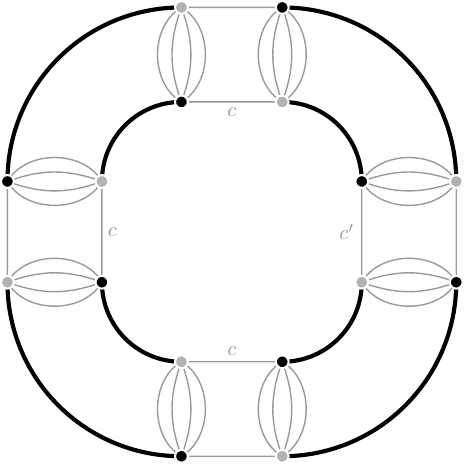}
  \end{center}
  \caption{A length-one and a length-four non-melonic divergent vacuum connected necklace. Remark that the left necklace diverges a $N^2$,
    whereas  the right one diverges as $\log N$ if $c' \ne c$.}
  \label{fig-vacuumnonmelonicdivergences}
\end{figure}

\subsection{Melonic correlation functions}
\label{sec-melon-corr-funct}

Let us call $\cof{E}$ and $\opif{E}$ respectively the connected and one-particle irreducible melonic functions (\ie sum over the melonic
Feynman amplitudes) of the theory  with $E$ external fields. With a
slight abuse of notation, the bare melonic two-point function $\cofbare{2}(\ntup, \nbtup) = \delta_{\ntup,\nbtup}\cofbare{2}(\ntup)$ is related to the bare melonic self-energy $\selfnrj(\ntup, \nbtup) = \delta_{\ntup,\nbtup}\selfnrj(\ntup)$ by the usual equation
\begin{equation*}
  \cofbare{2} (\ntup) =  \frac{C_b(\ntup)}{1 -  C_b(\ntup) \selfnrj(\ntup)}.
\end{equation*}
$\selfnrj(\ntup)$ is the sum over colors $c$ of a unique (monochrome)
function $\selfnrjmono$ of the single integer $n_c$:
\begin{equation*}
  \selfnrj(\ntup) = \sum_c  \selfnrjmono(n_c).
\end{equation*}
$\selfnrj$ is uniquely defined by the last two equations and the
following one (see \cref{fig-relationship-G2Gamma2})
\begin{equation}
  \selfnrjmono(n_c ) = - g_b Z_b^2  \sum_{\tuple p\in\Z^{5}} \delta_{p_c,n_c}\cofbare
  2(\tuple p)=- g_b Z_b^2  \sum_{\tuple
    p\in\Z^{5}}\frac{\delta_{p_c,n_c}}{C_{b}^{-1}(\tuple p)- \selfnrj(\tuple p)}.\label{eq-defGamma2}
\end{equation}
\begin{figure}[!htp]
  \centering
  \raisebox{-.5\height}{\includegraphics[scale=.6]{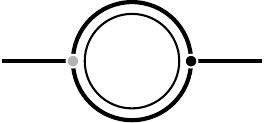}}\qquad$=$\qquad
  \raisebox{-.3\height}{\includegraphics[scale=.8]{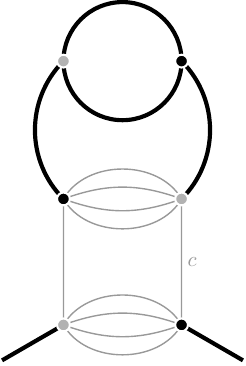}}
  \caption{Pictorial representation of the relationship between
    $\cofbare 2$ and $\selfnrjmono$. A circle stands for a connected
    function, two concentric ones for a 1PI monochrome function.}
  \label{fig-relationship-G2Gamma2}
\end{figure}

Similarly the \emph{bare} melonic four-point vertex function
$\opifbare 4(\ntup,\nbtup, \mtup, \mbtup)$ is the sum over colors $c$ of contributions
defined through a unique matrix $\opifbaremono 4(n_c, \bar n _c)$ which corresponds to the melonic invariant $V_c$:
\begin{equation*}
  \opifbare 4(\ntup,\nbtup, \mtup, \mbtup) = \sum_{c=1}^{5} \delta_{\ntup_{\hat c}  , \nbtup_{\hat c}} 
  \delta_{\mtup_{\hat c}  , \mbtup_{\hat c}}   \delta_{n_{c}  , \bar m_{c}}  \delta_{m_{c}  , \bar n_{c}} 
  \opifbaremono 4 (n_c,\bar n_c).
\end{equation*}
$\opifbare 4$ is uniquely defined by the previous equation and the
following one (see \cref{fig-relationship-G2Gamma4mono})
\begin{equation*}
  \opifbaremono 4 (n_c ,\bar n_c) = - g_bZ_b^2 \sbr[3]{1  +\sum_{\ptup,
    \qbtup\in\Z^{5}}\delta_{\ptup_{\hat c},\qbtup_{\hat
      c}}\delta_{p_c,n_c}  \delta_{\bar q_c, \bar n_c} \cofbare
  2(\ptup)  \cofbare 2(\qbtup) \opifbaremono 4(n_c ,\bar n_c)},
\end{equation*}
which solves to
\begin{equation}\label{eq-Gammabar4-expression-fct-G2}
  \opifbaremono 4 (n_c ,\bar n _c) = \frac{ -g_bZ_b^2}{1 + g_b  Z_b^2\sum_{\ptup, \qbtup}\delta_{\ptup_{\hat c},\qbtup_{\hat c}}\delta_{p_c,n_c}  \delta_{\bar q_c, \bar n_c} \cofbare 2(\ptup)  \cofbare 2(\qbtup)}.
\end{equation}
\begin{figure}[!htp]
  \centering
  \raisebox{-.5\height}{\includegraphics[scale=.8]{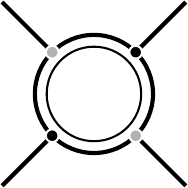}}\qquad$=$\qquad
  \raisebox{-.5\height}{\includegraphics[scale=.8]{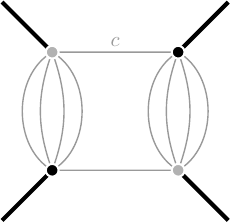}}\quad + \quad
  \raisebox{-.5\height}{\includegraphics[scale=.8]{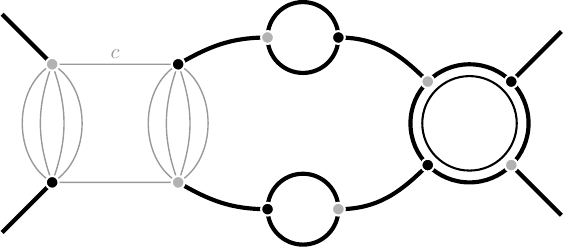}}
  \caption{Pictorial representation of the relationship between
    $\opifbaremono 4$ and $\cofbare 2$ }
  \label{fig-relationship-G2Gamma4mono}
\end{figure}

At fixed cutoff $N=2M^{\jm}$, these equations define $\selfnrj$, $\cofbare 2$
and $\opifbare 4$ (hence also $\cofbare 4$) at least as analytic functions for $g_bZ_b^{2}$ sufficiently small, 
because the species of melonic graphs is exponentially bounded as the
number of vertices increases, see \cref{sec-analyticity}. However this does not allow to take the limit $N\to \infty$ since the radius of convergence shrinks to zero in this limit. In short we need to now renormalize.

\section{Perturbative renormalization}
\label{sec-pert-renormalization}

\subsection{Renormalized 1PI functions}

The renormalization consists in a melonic BPHZ scheme which is given by BPHZ-like normalization conditions at zero external momenta, but restricted to the divergent sector, namely melonic graphs\footnote{The true BPHZ prescription in standard field theory imposes conditions on the full 1PI functions of the theory, not just their melonic part. This is because all 1PI graphs diverge in standard field theory. In this tensorial theory since non-melonic graphs are convergent the full BPHZ prescription is not minimal, and differs from the melonic BPHZ prescription only by unnecessary finite renormalizations.}.

The standard renormalization procedure expresses the 1PI correlation
functions in terms of renormalized quantities through a Taylor
expansion:
\begin{align*}
  \opifbare 2&\defi C_{b}^{-1}-\selfnrj,\\
  \opifbare 2(\ntup)&=\opifbare 2(0)+\opifmr 2(\ntup)=\opifbare 2(0)+ \ntup^2\frac{\partial\opifbare 2}{\partial \ntup^2}(0)+
  \opifr 2 (\ntup),\\
  \opifbaremono 4(n_{c},\bar n_{c})&=\opifbaremono 4 (0,0)+\opifmonor
  4(n_{c},\bar n_{c})
\end{align*}
where the subscript $\textit{mr}$ means mass-renormalized, and
$\textit{r}$ means full renormalization, with
\begin{equation*}
  m_{r}^{2}=Z_b m_b^2-\selfnrj(0),\quad 1=Z_{r}=Z_{b}-\frac{\partial \selfnrjmono}{\partial n_{c}^2}(0), \quad \opifbaremono 4 (0,0)=-g_{r}Z_r^2 = - g_r.
\end{equation*}
Consequently we have the usual renormalization conditions
\begin{equation*}
  \opifmr 2(0)=\opifr 2(0)=0,\quad \frac{\partial \opifr 2}{\partial
    n^2}(0)=0,\quad\opifmonor 4 (0,0)=0.
\end{equation*}

The full \emph{melonic} two-point function is therefore
\begin{equation*}
  \cof 2(\ntup) = \frac{\kappa_{\jm}(\ntup)}{\opifbare 2(\ntup)} =
  \frac{\kappa_{\jm}(\ntup)}{m_r^2  +\opifmr 2(\ntup)} = \frac{\kappa_{\jm}(\ntup)}{\ntup^2 + m_{r}^2+\opifr 2 (\ntup)} 
\end{equation*}
so that in particular $\cof 2(0) = m_r^{-2}$.

\subsection{Mass renormalization}

Let us start by performing the mass renormalization, and postpone the wave-function and four-point 
coupling constant renormalization to the next section. Indeed mass renormalization is simpler as it does not involve
renormalons \cite{Riv1}. So throughout this section we keep the bare coupling constant $g_b$, and the bare wave-function normalization $Z_b$.

The mass renormalization subtracts recursively the value of all
subinsertions at $0$ external momentum. Hence, recalling \cref{eq-defGamma2}, the monochrome
melonic mass-renormalized self-energy $\selfnrjmono[mr]$ obeys the closed equation
\begin{equation*}
  \selfnrjmono[mr](n_c) =  \selfnrjmono (n_c) -  \selfnrjmono (0) =- g_bZ_b^2
  \sum_{\ptup \in\Z^5} \kappa_{\jm}(\ptup)\frac{\delta_{p_c,n_c} - \delta_{p_c,0}}{Z_b \ptup^2 + m_{r}^2  - 
    \sum_{c'} \selfnrjmono[mr](p_{c'}) }.
\end{equation*}
The sum over $\ptup$ in the equation above diverges only
logarithmically as $\jm \to \infty$. The total mass counterterm is 
\begin{align*}
  \delta_m &=  m_r^2 - Z_b m_b^2  = g_b Z_b^2 \sum_c  \delta_m^c,   \\
  \delta_m^c &= \sum_{\ptup \in \Z^5}\frac{\kappa_{\jm}(\ptup)\, \delta_{p_c,0}}{Z_b \ptup^2 +
    m_{r}^2-\sum_{c' } \selfnrjmono[mr](p_{c'})}=\sum_{\ptup \in \Z^4} \frac{\kappa_{\jm}(0,\ptup)}{Z_b \ptup^2 + m_{r}^2-\sum_{c'\neq c}
    \selfnrjmono[mr](p_{c'})}
\end{align*}
where we used that $\selfnrjmono[mr](0)=0$. Remark that $ \delta_m^c $ is independent of $c$, so that in fact 
\begin{equation*}
  \delta_m = 5 g_b Z_b^2 \sum_{\ptup \in \Z^4} \frac{\kappa_{\jm}(0,\ptup)}{Z_b \ptup^2 + m_{r}^2-\sum_{c=1}^{4}
    \selfnrjmono[mr](p_{c})}.
\end{equation*}

\subsection{Effective renormalization}
\label{sec-wave-funct-renorm}

We want to perform only the effective (or ``useful'') part of the coupling constant and 
wave-function renormalizations, that is when the
inner loop slice is higher than the external one.

\subsubsection{Multiscale decomposition}
\label{sec-mult-decomp}

After full mass renormalization, all correlation functions, as formal
power series in $g_{b}$, are expressed as sums over Feynman graphs
with mass-renormalized amplitudes containing mass-renormalized
propagators, \ie
\begin{equation*}
  C_{mr}(\ntup)\defi\frac{\kappa_{\jm}(\ntup^{2})}{Z_{b}\ntup^{2}+m_{r}^{2}}.
\end{equation*}
Multiscale analysis amounts to decompose the
(mass-renormalized here) propagator into a sum of sclice propagators
$\slicedpropa{j}_{mr}(\ntup)$ where for each
$j\in\set{0,1,\dotsc,\jm}$, $\slicedpropa{j}_{mr}(\ntup)$ ensures that
$\ntup^{2}$ is of order $M^{2j}$. Similarly to \cref{UV-cutoff} we define
\begin{align*}
  \kappa_{j}(\ntup^{2})&\defi\indic_{[-\frac 52,\frac
                         52]}\star\chi_{\frac
                         32}(M^{-2j}\ntup^{2})=\kappa(M^{-2j}\ntup^{2}),\quad 0\les j\les\jm\\
  \intertext{and decompose $\kappa_{\jm}$ as follows}
  \kappa_{\jm}&=\sum_{j=0}^{\jm}\eta_{j},\quad
                \eta_{j}=\kappa_{j}-\kappa_{j-1}\text{ for $1\les
                j\les\jm$,}\quad\eta_{0}=\kappa_{0}.
\end{align*}
Provided $M^{2}>2$, $\eta_{j}$ ($j>0$) is positive, smooth, and satisfies
\begin{equation*}
  \eta_{j}(\ntup^{2})=
  \begin{cases}
    0&\text{if $\ntup^{2}<M^{-2}M^{2j}$ or $\ntup^{2}>2M^{2j}$,}\\
    1&\text{if $\frac 2{M^{2}}M^{2j}\les\ntup^{2}\les M^{2j}$.}
  \end{cases}
\end{equation*}
As a consequence, we define $\slicedpropa{j}_{mr}(\ntup)$ as
$\eta_{j}(\ntup^{2})\del{Z_{b}\ntup^{2}+m_{r}^{2}}^{-1}$ so that
$C_{mr}(\ntup)=\sum_{j=0}^{\jm}\slicedpropa{j}_{mr}(\ntup)$. The
decomposition of each propagator in the amplitude $A_{mr}(G)$ of any
Feynman graph $G$ allows to write
\begin{equation*}
  A_{mr}(G)=\sum_{\mu\in[\jm]^{e(G)}}A_{mr}^{\mu}(G).
\end{equation*}
$e(G)$ is the number of internal edges of $G$. $\mu$ is called a scale
attribution and corresponds to choosing one index $j_{\ell}$ in $[\jm]$ for each
internal edge $\ell$ of $G$ so that the corresponding propagator is $\slicedpropa{j_{\ell}}_{mr}$.

\subsubsection{Effective constants}
\label{sec-effective-constants}

It will be convenient to define some more cutoff functions: for $j\in[\jm]$
\begin{equation*}
  \eta_{\ges j}\defi\sum_{l=j}^{\jm}\eta_{l}=\kappa_{\jm}-\kappa_{j-1}.
\end{equation*}
\begin{defn}[Effective wave-function]
  \label{def-Zj}
  The effective wave-function constant $Z_j$ is
  \begin{equation*}
    Z_{j}\defi Z_{b}-\frac{\partial \selfnrjmonomrslice{\ges j+1}}{\partial
      n_{c}^{2}}(0)
  \end{equation*}
  where $\selfnrjmrslice{\ges j}(\ntup)=\sum_{c}\selfnrjmonomrslice{\ges
  j}(n_{c})$ is the sum of mass-renormalized amplitudes of all 1PI melonic
$2$-point graphs, all internal scales of which are greater than or equal
to $j$, namely
\begin{equation*}
    \selfnrjmonomrslice{\ges j}(n_{c})\defi - g_bZ_b^2 \sum_{\ptup
    \in\Z^5} \eta_{\ges j}(\ptup^{2})\frac{\delta_{p_c,n_c} - \delta_{p_c,0}}{Z_b \ptup^2 + m_{r}^2  - 
    \sum_{c'} \selfnrjmono[mr](p_{c'})}.
\end{equation*}
\end{defn}
Note that with these notations, $Z_{\jm}=Z_{b}$ and $Z_{-1}=Z_{r} = 1$.
\begin{defn}[Effective coupling constant]
  \label{def-effective-coupling}
  The effective coupling constant $g_{j}Z_{j}^{2}$ is
  \begin{align*}
    -g_{j}Z_{j}^{2}&\defi \opifmonobslice{4}{\ges j+1}(0,0)\\
    \intertext{where}
    \opifmonobslice{4}{\ges j}(n_{c},\bar n_{c})&\defi \frac{
                                                  -g_bZ_b^2}{1 +
                                                  g_bZ_b^2\sum_{\ptup,\qbtup}\delta_{\ptup_{\hat
                                                  c},\qbtup_{\hat
                                                  c}}\delta_{p_c,n_c}
                                                  \delta_{\bar q_c, \bar
                                                  n_c}
                                                  \cofmrslice{2}{\ges
                                                  j}(\ptup)
                                                  \cofmrslice{2}{\ges j}(\qbtup)}\\
  \intertext{and}
  \cofmrslice{2}{\ges j}(\ntup)&\defi \frac{\eta_{\ges j}(\ntup^{2})}{Z_b\ntup^2 +
                       m_r^2  -\selfnrjmrslice{\ges j} (\ntup)}.
\end{align*}
\end{defn}
With these conventions, $g_{\jm} =g_b$ and $g_{-1} =g_r$.

\subsection{Analyticity}
\label{sec-analyticity}

This section is devoted to proving that the effective wave-functions and coupling constants are analytic functions of the
bare coupling $g_{b}$ (in a disk of radius going to $0$ as $\jm\to\infty$).

According to \cref{fig-relationship-G2Gamma2}, the generating function
for the number of 1PI divergent $2$-point graphs is
\begin{equation*}
  \selfnrjGFb(x)=\sum_{n=1}^{\infty}5^{n-1}C_{n-1}\,x^{n},\qquad C_{n}=\frac{1}{n+1}\binom{2n}{n}.
\end{equation*}
This can be proven either by solving the associated equation for
$\selfnrjGFb$,
\begin{equation}\label{eq-GFeq-2pt-Gamma}
    \selfnrjGFb=\frac{x}{1-5\selfnrjGFb}\iff 5\del[1]{\selfnrjGFb}^2-\selfnrjGFb+x=0,
\end{equation}
or by noticing that divergent melonic $2$-point graphs of
order $n$ (the root-vertex of which has a fixed colour) are in bijection with rooted plane trees with $n-1$ edges with a choice of one colour among five per edge. As such trees are counted by the Catalan number $C_{n-1}$, we get the result.

According to \cref{eq-Gammabar4-expression-fct-G2}, the monochrome 1PI generating function $\opifGFb 4$ of divergent $4$-point graphs is given by
\begin{align*}
    \opifGFb 4 &= x\frac{\del[1]{1-5\selfnrjGFb}^2}{\del[1]{1-5\selfnrjGFb}^2-x}\\
    \intertext{which, from \cref{eq-GFeq-2pt-Gamma}, gives}
    \opifGFb 4 &=x\frac{1-5x-5\selfnrjGFb}{1-6x-5\selfnrjGFb}.
\end{align*}
By a very simple application of the transfer theorems of Flajolet and Sedgewick \cite[chapter VI]{Flajolet2009aa}, the coefficients of $\opifGFb 4$ are asymptotically equal to $\frac{20^n}{64\sqrt{\pi n^3}}$.
\begin{hproof}
  The asymptotic value of the coefficients of $\opifGFb 4$ is fixed by its behaviour near its dominant singularity, namely the one closest to the origin. From \cref{eq-GFeq-2pt-Gamma},
  \begin{equation*}
    \opifGFb 2(x)=\frac 1{10}\del[1]{1-\sqrt{1-20x}}
  \end{equation*}
  so that $\opifGFb 4$ has a branch point at $x=1/20$ and other possible singularities at the roots of $1-6x-\tfrac 12\del{1-\sqrt{1-20x}}=0\iff \sqrt{1-20x}=12x-1$. It turns out that this last equation has no complex solution on $\C\setminus\R_{\ges\frac 1{20}}$ (but it has two roots on the second sheet of the Riemann surface of the square root). It can indeed be checked that $1-20x=(12x-1)^2$ has two roots, $0$ and $\frac 1{36}$, which are solutions of $-\sqrt{1-20x}=12x-1$. As a consequence, the asymptotic behaviour of the coefficients of $\opifGFb 2$ is given by the coefficients of its expansion around $\frac 1{20}$. But
  \begin{equation*}
    \opifGFb 4=\frac 1{16}-\frac 1{32}\sqrt{1-20x}+\bO(1-20x)
  \end{equation*}
  so that
  \begin{equation*}
    [x^n]\opifGFb 4\sim_{n\to\infty} \frac{20^n}{64\sqrt{\pi n^3}}.
  \end{equation*}
\end{hproof}

Remembering the definitions of $Z_j$ and $g_jZ_j^2$ (see \cref{sec-effective-constants}), we have
\begin{align}
      Z_j&\defi Z_{b}-\frac{\partial}{\partial
           n_{c}^{2}}\selfnrjmonomrslice{\ges j+1}(0)=Z_b+\sum_{n=1}^\infty (g_bZ_{b}^{2})^n A_n(m_r^2,Z_{b},\jm,j),\label{eq-def-An}\\
  g_jZ_j^2&\defi -\opifmonobslice{4}{\ges j+1}(0,0)=\sum_{n=1}^{\infty}(g_bZ_b^2)^nB_n(m_r^2,Z_b,\jm,j).\label{eq-def-Bn}
\end{align}
$A_n$ is the sum of the derivatives of the mass-renormalized amplitudes of the 1PI
divergent melonic $2$-point graphs of order $n$. $B_n$ is the sum of
the mass-renormalized amplitudes of the 1PI divergent melonic
$4$-point graphs of order $n$. According to their generating
functions, the number of such graphs is bounded by a constant to the
power $n$. Moreover there certainly exist $p,q\in\N$ such that the amplitudes of these graphs are bounded by $(\jm)^{pn}M^{2qn\jm}$.

Recall that
\begin{equation*}
  Z_{b}=1+\frac{\partial\selfnrjmono}{\partial n_{c}^{2}}(0)=1+\frac{\partial\selfnrjmono[mr]}{\partial n_{c}^{2}}(0).
\end{equation*}
Then, by the implicit function theorem, $Z_{b}$ is an analytic
function of $g_{b}$ in a neighbourhood of $0$ (which shrinks to
$\set{0}$ as $\jm\to\infty$). Let us now define $F$ and $G$ on
$\Omega_{1}\times\Omega_{2}$ where $\Omega_{1}$ (\resp $\Omega_{2}$)
is a complex neigbourhood of $0$ (\resp of $1$) such that
\begin{equation*}
  Z_{j}=F(g_{b},Z_{b})\quad\text{and}\quad g_{j}Z_{j}^{2}=G(g_{b},Z_{b}).
\end{equation*}
The amplitude of any divergent graph is a finite sum (because our
UV cutoff is compactly supported) of analytic
functions of $Z_{b}$ in $\Omega_{2}$. $A_{n}$ and $B_{n}$ are thus
analytic functions of $Z_{b}$. Series in \cref{eq-def-An,eq-def-Bn}
converge normally so that $F$ and $G$ are analytic on
$\Omega_{1}\times\Omega_{2}$. Finally $Z_{j}$ and $g_{j}Z_{j}^{2}$ are
holomorphic functions of $g_{b}$ around $0$, by composition of $F$ and
$G$ respectively with $Z_{b}(g_{b})$. This proves that, at fixed UV cut-off $\jm$, both $g_jZ_j^2$ and $Z_j$ are analytic functions of $g_b$ in a neighbourhood of $0$.

Note also that $g_j$ is an invertible function of $g_b$ in a neighbourhood of $0$.

\subsection{Asymptotic freedom}
\label{sec-asymptotic-freedom}

Our aim is to prove
\begin{thm}
  \label{thm-asymptotic-freedom}
  For all $j\in\set{-1,0,\dotsc,\jm -1}$,
  \begin{equation*}
  g_{j+1}-g_{j}=\beta_{j} g_{j}^{2}+\bO(g_{j}^{3})
\end{equation*}
where $\beta_{j}=\beta_{2}+\bO(M^{-j})$, $\beta_{2}$ is a negative
real number and $\bO(g_j^3)=g_j^3f(g_j)$ where $f$ is analytic around
the origin (in a domain which shrinks to $\set{0}$ as $\jm\to\infty$).
\end{thm}
\begin{proof}
  Let us define $\sbe{\alpha}{$j$}_{1},\sbe{\alpha}{$j$}_{2}$ and
  $\sbe{\gamma}{$j$}_{1}$ as coefficients of the Taylor expansions of
  $g_{j}Z_{j}^{2}$ and $Z_{j}$:
  \begin{equation*}
    g_{j}Z_{j}^{2}\fide \sbe{\alpha}{$j$}_{1}g_{b}+\sbe{\alpha}{$j$}_{2}g_{b}^{2}+\bO(g_{b}^{3}),\qquad
    Z_{j}\fide 1+\sbe{\gamma}{$j$}_{1}g_{b}+\bO(g_{b}^{2}).
  \end{equation*}
  We thus have
  \begin{equation*}
    g_{j}=\sbe{\alpha}{$j$}_{1}g_{b}+(\sbe{\alpha}{$j$}_{2}-2\sbe{\alpha}{$j$}_{1}\sbe{\gamma}{$j$}_{1})g_{b}^{2}+\bO(g_{b}^{3})\iff
    g_{b}=\frac{1}{\sbe{\alpha}{$j$}_{1}}g_{j}-\frac 1{\big(\sbe{\alpha}{$j$}_{1}\big)^{3}}(\sbe{\alpha}{$j$}_{2}-2\sbe{\alpha}{$j$}_{1}\sbe{\gamma}{$j$}_{1})g_{j}^{2}+\bO(g_{j}^{3}).
  \end{equation*}
  Inserting the previous equation into the Taylor expansion of
  $g_{j+1}$ at order $2$, we get
  \begin{equation*}
    g_{j+1}=\frac{\sbe{\alpha}{$j+1$}_{1}}{\sbe{\alpha}{$j$}_{1}}g_{j}-\frac{1}{\big(\sbe{\alpha}{$j$}_{1}\big)^{2}}\Big[\frac{\sbe{\alpha}{$j+1$}_{1}}{\sbe{\alpha}{$j$}_{1}}(\sbe{\alpha}{$j$}_{2}-2\sbe{\alpha}{$j$}_{1}\sbe{\gamma}{$j$}_{1})-\sbe{\alpha}{$j+1$}_{2}+2\sbe{\alpha}{$j+1$}_{1}\sbe{\gamma}{$j+1$}_{1}\Big]g_{j}^{2}
    +\bO(g_{j}^{3}).
  \end{equation*}

  Let us now compute the coefficients
  $\sbe{\alpha}{$j$}_{1},\sbe{\alpha}{$j$}_{2}$ and
  $\sbe{\gamma}{$j$}_{1}$:
  \begin{multline*}
    -g_{j}Z_{j}^{2}= \opifmonobslice{4}{\ges
      j+1}(0,0)=-g_{b}Z_{b}^{2}+(g_{b}Z_{b}^{2})^{2}\sbe{\cA}{$j$}_{4,2}(0,0)+\bO(g_{b}^{3}),\\
    \sbe{\cA}{$j$}_{4,2}(n_{c},\bar n_{c})=\sum_{p\in\Z^{4}}C_{\ges
      j+1}(p,n_{c}) C_{\ges j+1}(p ,\bar n_{c})
  \end{multline*}
  where
  $C_{\ges j+1}(p)\defi \eta_{\ges j+1}(p^{2})/(Z_{b}p^{2}+m_{r}^{2})$
  and $\sbe{\cA}{$j$}_{4,2}(n_{c},\bar n_{c})$ is the
  mass-renormalized amplitude, ``down to scale $j$'', of the rightmost
  graph of \cref{fig-melonicdivergences}. To get
  $\sbe{\alpha}{$j$}_{1}$ and $\sbe{\alpha}{$j$}_{2}$, we need the
  Taylor expansion of $Z_{b}$ at first order:
  \begin{equation*}
    Z_{b}\fide 1+\sbe{\gamma}{$-1$}_{1}g_{b}+\bO(g_{b}^{2})\implies -g_{j}Z_{j}^{2}=-g_{b}+(\sbe{A}{$j$}_{4,2}(0,0)-2\sbe{\gamma}{$-1$}_{1})g_{b}^{2}+\bO(g_{b}^{3})
  \end{equation*}
  where $\sbe{A}{$j$}_{4,2}$ equals $\sbe{\cA}{$j$}_{4,2}$ evaluated
  at $Z_{b}=1$. We have thus
  \begin{equation*}
    \sbe{\alpha}{$j$}_{1}=1,\quad\sbe{\alpha}{$j$}_{2}=2\sbe{\gamma}{$-1$}_{1}-\sbe{A}{$j$}_{4,2}(0,0).
  \end{equation*}
  Before computing the flow equation for $g_{j}$, we need the first
  order Taylor coefficient $\sbe{\gamma}{$j$}_{1}$ of $Z_{j}$:
  \begin{align*}
    Z_{j}&=Z_{b}-\frac{\partial \selfnrjmonomrslice{\ges j+1}}{\partial
           n_{c}^{2}}(0)=Z_{b}+g_{b}Z_{b}^{2}\frac{\partial}{\partial n_{c}^{2}}\sum_{p\in\Z^{4}}\frac{\eta_{\ges
           j+1}(p^{2}+n_{c}^{2})}{Z_{b}(p^{2}+n_{c}^{2})+m_{r}^{2}}\Big\vert_{n_{c}^{2}=0}+\bO(g_{b}^{2})\\
         &=Z_{b}-g_{b}Z_{b}^{3}\sum_{p\in\Z^{4}}\frac{\eta_{\ges j+1}(p^{2})}{(Z_{b}p^{2}+m_{r}^{2})^{2}}+g_{b}Z_{b}^{2}\sum_{p\in\Z^{4}}\frac{\eta'_{\ges
           j+1}(p^{2})}{Z_{b}p^{2}+m_{r}^{2}}+\bO(g_{b}^{2})\\
         &\fide
           Z_{b}-g_{b}Z_{b}^{3}\sbe{\tilde{\cA}}{j}_{4,2}(0,0)+g_{b}Z_{b}^{2}\cK_{j}+\bO(g_{b}^{2})\\
         &=1+\del{\sbe{\gamma}{-1}_{1}-\sbe{\tilde{A}}{j}_{4,2}(0,0)+K_{j}}g_{b}+\bO(g_{b}^{2})\\
         &\Longrightarrow \sbe{\gamma}{j}_{1}=\sbe{\gamma}{-1}_{1}-\sbe{\tilde{A}}{j}_{4,2}(0,0)+K_{j}
  \end{align*}
  where, once again, $\sbe{\tilde{A}}{j}_{4,2}$ (\resp $K_{j}$) equals
  $\sbe{\tilde{\cA}}{j}_{4,2}$ (\resp $\cK_{j}$) evaluated at
  $Z_{b}=1$. Finally, we get
  \begin{multline*}
    g_{j+1}-g_{j}=-\big[-\del{\sbe{A}{$j$}_{4,2}(0,0)-\sbe{A}{$j+1$}_{4,2}(0,0)}+2\del{\sbe{\tilde{A}}{$j$}_{4,2}(0,0)-\sbe{\tilde{A}}{$j+1$}_{4,2}(0,0)}\\
    -2K_{j}+2K_{j+1}\big]g_{j}^{2}+\bO(g_{j}^{3}).
  \end{multline*}
  We now prove that $K_{j}$ (like $K_{j+1}$) is of order $M^{-2j}$ and
  that the sum of the other terms in $\beta_{j}$ equals a positive
  constant plus $\bO(M^{-j})$. First, we note that
  $\eta_{j}(p^{2})=h(M^{-2j}p^{2})$ where $h(p^{2})=\kappa(p^{2})-\kappa(M^{2}p^{2})$. Remark also that the
                       support of $h$ is $[M^{-2},2]$.
  \begin{align*}
    K_{j}&=\sum_{p\in\Z^{4}}\frac{\eta'_{j+1}(p^{2})}{p^{2}+m_{r}^{2}}=M^{-2(j+1)}\sum_{p\in\Z^{4}}\frac{h'(M^{-2(j+1)}p^{2})}{p^{2}+m_{r}^{2}}\\
    &=\bO(M^{-2j})+\sum_{p\in(M^{-j-1}\Z)^{4}}M^{-4(j+1)}\frac{h'(p^{2})}{p^{2}}.
  \end{align*}
  The above sum is a Riemann sum of the compactly supported $C^{1}$
  function $h(p^{2})/p^{2}$. Its difference with the corresponding
  integral (which vanishes) is of order of the mesh, that is
  $M^{-j}$. Thus $K_{j}=\bO(M^{-j})$.
  \begin{align*}
    \tilde{A}\defi\sbe{\tilde{A}}{$j$}_{4,2}(0,0)-\sbe{\tilde{A}}{$j+1$}_{4,2}(0,0)&=\sum_{p\in\Z^{4}}\frac{\eta_{j+1}(p^{2})}{(p^{2}+m_{r}^{2})^{2}}=\bO(M^{-2j})+M^{-4(j+1)}\sum_{p\in(M^{-j-1}\Z)^{4}}\frac{h(p^{2})}{p^{4}}\\
                                                                                   &=\bO(M^{-j})+\int_{\R^{4}}\frac{h(p^{2})}{p^{4}}\,d^{4}p,\\
    A\defi\sbe{A}{$j$}_{4,2}(0,0)-\sbe{A}{$j+1$}_{4,2}(0,0)&=\sum_{p\in\Z^{4}}\frac{\eta_{\ges
                                                       j+1}^{2}(p^{2})-\eta_{\ges
                                                             j+2}^{2}(p^{2})}{(p^{2}+m_{r}^{2})^{2}}\\
                                                                                   &=\sum_{p\in\Z^{4}}\frac{\eta_{
                                                                                     j+1}^{2}(p^{2})+2\eta_{j+1}(p^{2})\eta_{
                                                                                     j+2}(p^{2})}{(p^{2}+m_{r}^{2})^{2}}\\
                                                                                   &=\bO(M^{-j})+\int_{\R^{4}}\frac{h^{2}(p^{2})+2h(p^{2})h(M^{-2}p^{2})}{p^{4}}\,d^{4}p
  \end{align*}
  where we used $\eta_{\ges j+1}=\eta_{j+1}+\eta_{\ges j+2}$ and
  $\eta_{i}\eta_{j}=0$ if $|i-j|>1$. We get
  \begin{align*}
    \beta_{j}=\beta_{2}+\bO(M^{-j}),\quad\beta_{2}&\defi
                                                    -\int_{\R^{4}}\frac{d^{4}p}{p^{4}}\sbr{2h(p^{2})-h^{2}(p^{2})-2h(p^{2})h(M^{-2}p^{2})}\\
    &=-\int_{\R^{4}}\frac{d^{4}p}{p^{4}}h(p^{2})\sbr{2\del{1-\kappa(M^{-2}p^{2})}+\kappa(p^{2})+\kappa(M^{2}p^{2})}<0.
  \end{align*}
  The analyticity of $g_{j+1}-g_{j}-\beta_{j}g_{j}^{2}$ as a function
  of $g_{j}$ follows from the analyticity of $g_{j}$ and $Z_{j}$ as
  functions of $g_{b}$ (see \cref{sec-analyticity}).
\end{proof}
We have proven that for all $j$, $g_{j}$ is a holomorphic function of
$g_{b}$ in a neighbourhood of $0$ which goes to $\set 0$ as
$\jm\to\infty$. This defines $g_{j+1}$ as a holomorphic function of
$g_{j}$, in a neighbourhood of $0$ which goes to $\set 0$ as
$\jm\to\infty$. Moreover the first two coefficients of the expansion
$g_{j+1}$ in powers of $g_{j}$ have a finite limite as $\jm\to\infty$.
\begin{hquestion}
  On a montré que $\forall j, g_{j}$ est une fonction analytique de
  $g_{b}$ dans un voisinage de $0$ qui tend vers $\varnothing$ quand
  $\jm\to\infty$. Cela définit $g_{j+1}$ comme fonction analytique de
  $g_{j}$, dans un voisinage de $0$ qui tend vers $\varnothing$ quand
  $\jm\to\infty$. Or les deux premiers coefficients du développement
  de $g_{j+1}$ en puissances de $g_{j}$ ont une limite (finie) quand
  $\jm\to\infty$. Est-ce le cas à tout ordre? La série naturelle à
  considérer est $g_{j}\del{\set{g_{l},Z_{l},\,l\ges j+1}}$ mais quid de $g_{j+1}(g_{j})$?
\end{hquestion}

\section{Holomorphic RG flow}
\label{sec-holomorphic-rg-flow}

In \cref{sec-asymptotic-freedom} we proved that
\begin{equation*}
  g_{j+1}=g_{j}+\beta_{j}g_{j}^{2}+g_{j}^{3}f(g_{j})\fide h_{\jm,j}(g_{j})
\end{equation*}
where $f$ is holomorphic on a neighbourhood $\Omega_{\jm}$ of the origin and
$\beta_{j}=\beta_{2}+\bO(M^{-j})$, $\beta_{2}<0$. Note that \apriori
$\Omega_{\jm}\to\set{0}$ as $\jm\to\infty$. But the first two
Taylor coefficients of $h_{\jm,j}$ have in fact finite limits as the
ultraviolet cutoff is removed. In order to know if such a result holds
true at all orders, which would prove that $h_{\jm,j}$ is holomorphic in a
domain uniform in $\jm$, we need a better understanding of the series
$g_{j+1}(g_{j})$. In the sequel, we assume it.
\begin{assumption}
  The series $g_{j+1}(g_{j})$ is holomorphic in a domain uniform in $\jm$.
\end{assumption}
The dynamics defined by $h_{\jm,j}$ is not autonomous, its Taylor
coefficients depend on $j$. Nevertheless, far from the infrared cutoff
(here $m_{r}^{2}$), the behaviour of $\beta_{j}$ suggests that the
dynamics becomes autonomous. In the sequel, we assume it.
\begin{assumption}
  The discrete RG flow $g_{j+1}=h(g_{j})$ is defined by the iteration
  of a (unique) holomorphic map $h$, tangent to the identity, and such
  that
  \begin{equation}
    \label{eq-approx-RG-flow}
    h(z)=z+\beta_{2}z^{2}+\bO(z^{3}),\quad \beta_{2}<0.
  \end{equation}
\end{assumption}

Throughout this section, we will be interested in Cauchy problems with
complex initial data. In particular, we will prove appropriate uniform boundedness of their
solutions with respect to their initial data. In other words, we would
like to approach results such as ``for all $\epsilon>0$, there exists a complex
neighbourhood $\Omega_{\epsilon}$ of $0$ such that $g_{r}=g_{-1}\in\Omega_{\epsilon}$
implies for all $j\ges 0$, $|g_{j}|<\epsilon$''.

\subsection{Parabolic holomorphic local dynamics}
\label{sec-parab-holom-local-dynamics}

Our first objective is to understand the qualitative behaviour of the approximate RG
flow \eqref{eq-approx-RG-flow} by invoking the theory of holomorphic dynamical
systems. To this aim, we need to recall some classical definitions and
theorems, see \cite{Abate2010aa} for example.
\begin{defn}[Holomorphic dynamical system]
  \label{def-holo-dyna-sys}
  Let $M$ be a complex manifold, and $p\in M$. A (discrete)
  \emph{holomorphic local dynamical system} at $p$ is a holomorphic map
  $f:U\to M$ such that $f(p)=p$, where $U\subseteq M$ is an open
  neighbourhood of $p$; we shall assume that $f\neq\id_{U}$. We shall
  denote by $\EndMp$ the set of holomorphic local dynamical systems
  at $p$.
\end{defn}
We will only consider the case $M=\C$ and $p=0$.
\begin{defn}[Stable set]
  \label{def-stable-set}
  Let $f\in\EndMp$ be a \hlds defined on an open set $U\subseteq
  M$. Then the stable set $K_{f}$ of $f$ is
  \begin{equation*}
    K_{f}\defi\bigcap_{k=0}^{\infty}f^{\circ(-k)}(U).
  \end{equation*}
\end{defn}
In other words, the stable set of $f$ is the set of all points $z\in
U$ such that the orbit $\set[0]{ f^{\circ k}(z)\tqs k\in\N}$ is well-defined. If
$z\in U\setminus\set[0]{K_{f}}$ , we shall say that $z$ (or its orbit)
escapes from $U$. Clearly, $p\in K_{f}$ and so the stable set is never
empty (but it can happen that $K_{f}=\set{p}$).
\begin{defn}[Conjugation]
  We say that $f,g\in\EndCO$ are holomorphically conjugated if there
  exists a holomorphic map $h$ such that $h\circ f=g\circ h$.
\end{defn}
\begin{defn}[Classification]
  Let $f\in\EndCO$ be given by $f(z)=\lambda z+\sum_{j\ges
    2}a_{j}z^{j}$. We say that $f$ is
  \begin{enumerate}
  \item hyperbolic if $|\lambda|\neq 1$,
  \item parabolic if $\lambda^{q}=1$ for some
    $q\in\N\setminus\set{0}$,
  \item elliptic if $|\lambda|=1$ and $\lambda^{q}\neq 1$ for all $q\in\N\setminus\set{0}$.
  \end{enumerate}
\end{defn}
The RG flow we consider here is thus a parabolic dynamical system
($\lambda=1$).
\begin{defn}[Multiplicity]
  Let $f\in\EndCO$ be a \hlds with a parabolic fixed point at the
origin. Then we can write:
\begin{equation*}
  f(z)=e^{2i\pi p/q}z+a_{r+1}z^{r+1}+\bO(z^{r+2}),
\end{equation*}
with $a_{r+1}\neq 0$. $r+1$ is called the multiplicity of $f$.
\end{defn}
\begin{defn}[Directions]
  \label{def-directions}
  Let $f\in\EndCO$ be tangent to the identity of multiplicity $r+1\ges
  2$. Then a unit vector $v\in\bbS^{1}$ is an attracting (\resp
  repelling) direction for $f$ at the origin if $a_{r+1}v^{r}$ is real
  negative (\resp real positive).
\end{defn}
Clearly, there are $r$ equally spaced attracting directions, separated
by $r$ equally spaced repelling directions: if $a_{r+1}=|a_{r+1}|e^{i
  \alpha}$, then $v=e^{i\theta}$ is attracting (\resp repelling) \ifft
\begin{equation*}
  \theta=\frac{2k+1}{r}\pi-\frac{\alpha}r\quad \del{\text{\resp $\theta=\frac{2k}{r}\pi-\frac{\alpha}r$}}.
\end{equation*}
It turns out that to every attracting direction is associated a
connected component of $K_{f}\setminus\set{0}$.
\begin{defn}[Basins]
  \label{def-basin}
  Let $v\in\bbS^{1}$ be an attracting direction for an $f\in\EndCO$
  tangent to the identity. The basin centerd at $v$ is the set of
  points $z\in K_{f}\setminus\set{0}$ such that
  $\lim_{k\to\infty}f^{\circ k}(z)=0$ and
  $\lim_{k\to\infty}f^{\circ k}(z)/|f^{\circ k}(z)|=v$. If $z$ belongs to the
  basin centered at $v$, we shall say that the orbit of $z$ tends to
  $0$ tangent to $v$.
\end{defn}
\begin{defn}[Petals]
  \label{def-petal}
  An attracting petal centered at an attracting direction $v$ of an
  $f\in\EndCO$ tangent to the identity is an open simply connected
  $f$-invariant set $P\subseteq K_{f}\setminus\set{0}$ such that a point
  $z\in K_{f}\setminus\set{0}$ belongs to the basin centered at $v$ \ifft
  its orbit intersects $P$. In other words, the orbit of a point tends
  to $0$ tangent to $v$ if and only if it is eventually contained in
  $P$. A repelling petal (centered at a repelling direction) is an attracting petal for the inverse of $f$.
\end{defn}
\begin{thm}[Leau-Fatou flower]
  \label{thm-leau-fatou}
  Let $f\in\EndCO$ be a \hldsttti with multiplicity $r+1\ges 2$ at the
  fixed point. Let $v_{1}^{\pm},\dotsc,v_{r}^{\pm}\in\bbS^{1}$ be the
  attracting (\resp repelling) directions of $f$ at the origin. Then,
  \begin{enumerate}
  \item for each attracting (\resp repelling) direction $v_{j}^{\pm}$
    there exists an attracting (\resp repelling petal) $P_{j}^{\pm}$,
    so that the union of these $2r$ petals together with the origin
    forms a neighbourhood of the origin. Furthemore, the $2r$ petals
    are arranged cyclically so that two petals intersects \ifft the
    angle between their central directions is $\pi/r$.
  \item $K_{f}\setminus\set 0$ is the (disjoint) union of the basins
    centered at the $r$ attracting directions.
  \item If $B$ is a basin centered at one of the attracting
    directions, then there is a function $\varphi:B\to\C$ such that
    $\varphi\circ f(z)=\varphi(z)+1$ for all $z\in B$. Furthermore if $P$ is
    the corresponding petal, then $\varphi|_{P}$ is a biholomorphism with
    an open subset of the complex plane containing a right-half plane
    -- and so $f|_{P}$ is holomorphically conjugated to the
    translation $z\mapsto z+1$.
  \end{enumerate}
\end{thm}
As a consequence of \cref{thm-leau-fatou}, if $z$ belongs to an
attracting petal $P$ of a holomorphic local dynamical system tangent to
the identity, then its entire orbit is contained in $P$ and moreover
$f^{\circ n}(z)$ goes to $0$ (as $n\to\infty$), tangentially to the
corresponding attracting direction. A typical trajectory can be seen
on \cref{fig-flower-mult-4}. 
\begin{figure}[!htp]
  \centering
  \includegraphics[scale=1.3]{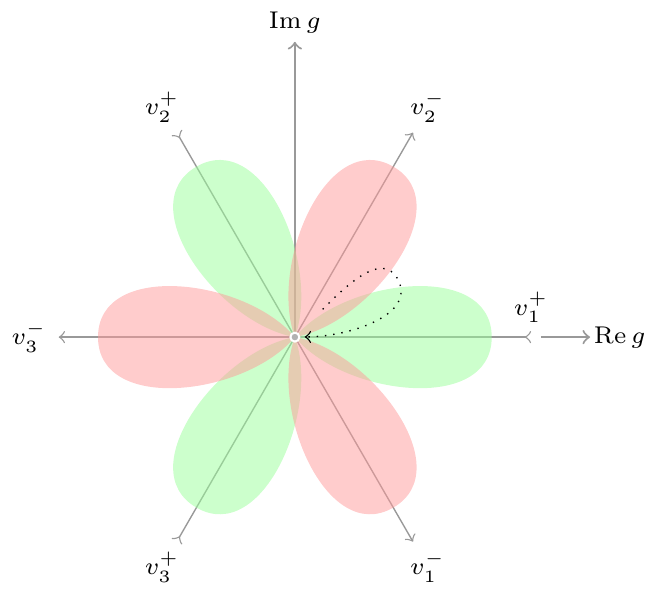}
  \caption{Attracting (green) and repelling (red) petals of a dynamics of
    multiplicity $4$, and a typical trajectory.}
  \label{fig-flower-mult-4}
\end{figure}
Note that \cref{thm-leau-fatou} asserts the existence of attracting
and repelling petals whose union with the origin forms a neighbourhood
of the origin. The intersection properties of these petals implies
that their asymptotic opening angle (\ie their opening angle close to
$0$) is strictly bigger than $\pi/r$ for a system of multiplicity
$r$. But in fact, with a bit more work, one can construct petals whose
asymptotic opening angle is $2\pi/r$, see \cite{Carleson1993aa}. Such
attracting petals are tangent at $0$ to their two neighbouring
repelling directions.\\

In case of the system \eqref{eq-approx-RG-flow}, we have a parabolic
dynamical system of multiplicity $2$ so that there is only one
attracting (\resp repelling) petal corresponding to the attractive
(\resp repelling) direction $(1,0)$ (\resp $(-1,0)$). The asymptotic
opening angle of the attracting petal is $2\pi$ which makes it very similar to cardioid-like
domains obtained by Loop Vertex Expansion
\cite{Delepouve2014aa,Rivasseau2016aa}, see \cref{fig-petal-2}.
\begin{figure}[!htp]
    \centering
    \begin{tikzpicture}[scale=2.5,line cap=round]
      \draw[->, black!40, semithick] (0:1.45cm) -- (0:1.65cm);
      \node at (0:1.8cm) {$\scriptstyle \Re g$};
      \draw[->, black!40, semithick] (90:0) -- (90:1.2cm);
      \node at (90:1.28cm) {$\scriptstyle \Im g$};
      \draw[-<, black!40] (0,0) -- (0:1.4cm);
      \node at (1.4cm,.1cm) {$\scriptstyle v^{+}$};
      \draw[->, black!40] (0,0) -- (0:-1.4cm);
      \node at (180:1.6cm) {$\scriptstyle v^{-}$};
      \fill[green!40,opacity=.5,xscale=1.2,yscale=.8] (0cm,1cm) arc
      [start angle=90, end angle=270, radius=.5cm] arc [start
      angle=90, end angle=270, radius=.5cm] arc [start angle=-90, end
      angle=90, radius=1cm];
      \node[vertexbw,scale=.7] at (0,0) {};
    \end{tikzpicture}
    \caption{A unique attracting petal of a multiplicity $2$ parabolic
      dynamics.}
    \label{fig-petal-2}
  \end{figure}
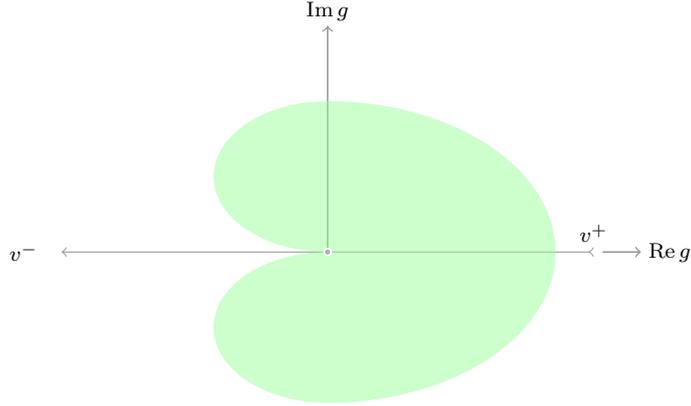

In the next sections, we get more quantitative results on the RG
trajectories in case $g_{r}$ is real, on the shapes of attracting
petals, and on the size of the Nevanlinna-Sokal disk they contain. To
this aim, we will study continuous dynamical systems, more precisely linear
ODEs, rather than iterations of holomorphic maps. This is justified by
the following argument. As we saw in \cref{sec-asymptotic-freedom},
there is evidence that the discrete RG flow of the $T^{4}_{5}$ tensor
field is well approximated by the dynamical system
\eqref{eq-approx-RG-flow}, at least in the deep ultraviolet. From
\cref{thm-leau-fatou} a trajectory starting in the unique attracting
petal $P_{+}$ remains forever in this petal. Moreover, in $P_{+}$, the
dynamics is conjugated to the translation $z\mapsto z+1$. But this
translation is the time $1$ flow of the constant vector field
$Y=\partial_{z}$. Thus there exists a holomorphic vector field $X$
such that $h$ in \cref{eq-approx-RG-flow} is the time $1$ flow of
$X$. The Taylor coefficients of $X$ can be computed recursively via
the equation $e^{X}(z)=h(z)$. One finds
$X=\del{\beta_{2}z^{2}+(\beta_{3}-2\beta_{2}^{2})z^{3}+\bO(z^{4})}\partial_{z}$
if $h(z)=z+\beta_{2}z^{2}+\beta_{3}z^{3}+\bO(z^{4})$. As a consequence
we will consider, in the next sections, ODEs of the form
\begin{equation*}
  g'=\beta_{2}g^{2}+\beta_{3}g^{3}+\bO(g^{4}), \quad\beta_{2}\in\R_{-}
\end{equation*}
keeping in mind that the above $\beta_{3}$ corresponds in fact to
$\beta_{3}-2\beta_{2}^{2}$ in the notation of \cref{eq-approx-RG-flow}.

\subsection{Quadratic flow}
\label{sec-quadratic-flow}

Let us first consider $f:\R\times\C\to\C$ such that
$f(t,z)=\beta_{2}z^{2}$, $\beta_{2}$ real negative, and the
following Cauchy problem:
\begin{subequations}
  \label{eq-continuous-quadratic-cpx-Cauchy-pb}
    \begin{align}
    g'&=f(t,g)\label{eq-continuous-quadratic-cpx-ODE}\\
    g(0)&=g_{r}\in\C.\label{eq-continuous-quadratic-cpx-initial-cond}
  \end{align}
\end{subequations}
The solution is obviously given by
\begin{equation*}
  \frac{1}{g(t)}=\frac{1}{g_{r}}-\beta_{2}t.
\end{equation*}
In polar
coordinates ($g=\rho e^{i\theta}$, $\theta\in[-\pi,\pi]$), we have
\begin{equation}\label{eq-solution-quadratic-cpx-flow-polar}
  \frac 1{g(t)}=\frac{e^{-i\theta}}{\rho}=\frac{\cos\theta_{r}}{\rho_{r}}-\beta_{2}t-i\frac{\sin\theta_{r}}{\rho_{r}}.
\end{equation}
If $\theta_{r}\in(-\pi,\pi)\setminus\set{0}$, $g$ is well-defined on
$\R$. If $\theta_{r}=0$, $g$ is well-defined on $\R_{+}$ (but explodes
at a finite negative time). If $\theta_{r}=\pi$, $g$ explodes at a finite (positive) time.\\

We now prove a uniform bound on $\module{g}$ for $g_{r}$ in the following compact
domain $\Omega_{\epsilon}$ of the complex plane.
\begin{defn}[Domain of uniform boundedness of a quadratic flow]
  \label{def-Omega}
  Let $\epsilon$ be a positive real number. In polar coordinates ($z=\rho
  e^{i\theta}$, $\rho\in\R_{+}$, $\theta\in[-\pi,\pi]$),
  \begin{equation*}
  \Omega_{\epsilon}\defi
  \begin{cases}
    \set{z\in\C\tqs \rho\les \epsilon}&\text{if
      $\theta\in[-\frac\pi 2,\frac\pi 2]$,}\\
    \set{z\in\C\tqs \rho\les \epsilon|\sin\theta|}&\text{if
      $|\theta|\in(\frac\pi 2,\pi]$.}
  \end{cases}
\end{equation*}
\end{defn}
The set $\Omega_{\epsilon}$ is made of three parts. On $\set{\Re z\ges 0}$, $\Omega_{\epsilon}$ is
a closed half-disk of radius $\epsilon$, centered at $0$. On $\set{\Re z\les
  0}\cap\set{\Im z\ges 0}$, $\Omega_{\epsilon}$ is a closed half-disk of radius $\frac
\epsilon 2$ centered at $i\frac \epsilon 2$. On $\set{\Re z\les
  0}\cap\set{\Im z\les 0}$, $\Omega_{\epsilon}$ is a closed half-disk of radius $\frac
\epsilon 2$ centered at $-i\frac \epsilon 2$. See \cref{fig-Omega} for a picture of $\Omega_{\epsilon}$.
\begin{figure}[!htp]
    \centering
    \begin{tikzpicture}[scale=2.5,line cap=round,domain=-180:180,samples=100]
      \draw[help lines,step=0.5cm] (-1.4,-1.4) grid (1.4,1.4);
      \draw[->] (-1.5,0) -- (1.5,0) node[right] {$\Re z$} coordinate(x
      axis);
      \draw[->] (0,-1.5) -- (0,1.5) node[above] {$\Im z$}
      coordinate(y axis); \draw[xshift=1 cm] (0pt,-1pt) -- (0pt,1pt)
      node[above right=.5pt,fill=white] {$\epsilon$};
      \draw[yshift=1 cm] (-1pt,0pt) -- (1pt,0pt) node[above
      right,fill=white] {$\epsilon$};
      \draw[yshift=-1 cm] (-1pt,0pt) -- (1pt,0pt) node[below
      right,fill=white] {$-\epsilon$};
      \foreach\y/\ytext in {-.5/-\frac{\epsilon}{2}, .5/\frac{\epsilon}{2}} \draw[yshift=\y
      cm] (1pt,0pt) -- (-1pt,0pt) node[left,fill=white] {$\ytext$};
      \filldraw[thick,fill=gray!50!white,opacity=.5] (0cm,1cm) arc
      [start angle=90, end angle=270, radius=.5cm] arc [start
      angle=90, end angle=270, radius=.5cm] arc [start angle=-90, end
      angle=90, radius=1cm];
      \draw[thin,red
      ] plot ({cos(.5*\x)*cos(.5*\x)*cos(\x)}, {cos(.5*\x)*cos(.5*\x)*sin(\x)});
    \end{tikzpicture}
    \caption{In gray, the domain $\Omega_{\epsilon}$ of \cref{def-Omega}. In red,
    the cardioid $\rho=\epsilon\cos^{2}(\tfrac{\theta}2)$.}
    \label{fig-Omega}
  \end{figure}
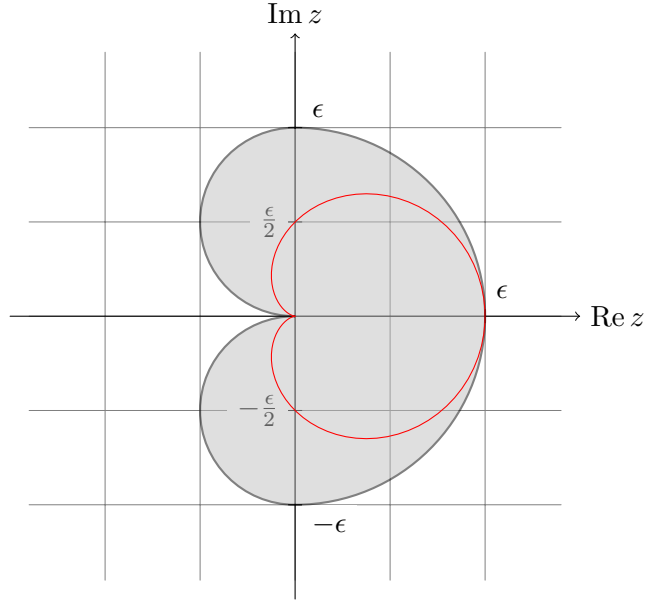
  \begin{rem}
    $\Omega_{\epsilon}$ contains the cardioid domain $\scC_{\epsilon}\defi\set{\rho\les
    \epsilon\cos^{2}(\tfrac{\theta}2)}$ which is the typical domain of
  analyticity of correlation functions predicted by Loop Vertex Expansion.
  \end{rem}

\begin{thm}
  \label{thm-continuous-quadratic-cpx-flow-boundedness}
  If $g_{r}\in\Omega_{\epsilon}$, then for all $t\in\R_{+}$,
  $\module{g(t)}\les \epsilon$.
\end{thm}
\begin{proof}
  From \cref{eq-solution-quadratic-cpx-flow-polar},
  \begin{equation}\label{eq-OneOverRho-complex-quadratic-flow}
  \frac{1}{\rho^{2}(t)}=\Big(\frac{\cos\theta_{r}}{\rho_{r}}-\beta_{2}t\Big)^{\!2}+\frac{\sin^{2}\theta_{r}}{\rho_{r}^{2}}.
\end{equation}
If $\theta_{r}\in[-\frac\pi 2,\frac\pi 2]$, $\cos\theta_{r}\ges 0$ and
$\rho$ attains its maximum at $t=0$ (recall that $\beta_{2}<0$) so
that $\rho(t)\les\rho_{r}$. If
$\module{\theta_{r}}\in(\frac\pi 2,\pi)$, $\cos\theta_{r}<0$ and
$\rho(t)\les\frac{\rho_{r}}{\module{\sin\theta_{r}}}$. This proves the desired bound.
\end{proof}
\begin{rem}
  In fact, by the holomorphic (on $\C^{*}$) change of coordinate
  $z\mapsto 1/z$, one can even prove that $g_{r}\in\Omega_{\epsilon}$ implies
  $g(t)\in\Omega_{\epsilon}$ for all $t>0$.
\end{rem}

\subsection{Cubic flow}
\label{sec-cubic-flow}

We now consider the following complex cubic differential flow:
\begin{subequations}\label{eq-complex-cubic-ODE-CauchyPb}
  \begin{align}
    g'&=\beta_{2}g^{2}+\beta_{3}g^{3}=\beta_{2}(x^{2}-y^{2})+\beta_{3}(x^{3}-3xy^{2})+2i\beta_{2}xy+i\beta_{3}(3x^{2}y-y^{3}),\label{eq-continuous-complex-cubic-ODE}\\
    g(0)&=g_{r}\in\C,
  \end{align}
\end{subequations}
where $x=\Re(g)$ and $y=\Im(g)$, $\beta_{2},\beta_{3}$ real with $\beta_{2}<0$. We have
\begin{thm}
  \label{thm-complex-cubic-ODE}
  For $\epsilon$ small enough, if $g_{r}\in\Omega_{\epsilon}$ then there exists a
  function
  \begin{align*}
    \phi:\R_{+}\times\Omega_{\epsilon}&\to\C\\
    (t,g_{r})&\mapsto\phi(t,g_{r})
  \end{align*}
  holomorphic in $g_{r}$, uniformly bounded, namely
  \begin{equation*}
    |\phi(t,g_{r})|<2\pi,\quad\text{for all $t\in\R_{+}$ and $g_{r}\in\Omega_{\epsilon}$},
  \end{equation*}
  such that the unique maximal solution of the Cauchy problem
  \eqref{eq-complex-cubic-ODE-CauchyPb} defined on $\R_{+}$ is
  \begin{equation*}
    g(t)=\frac{g_{r}}{1-\beta_{2}g_{r}t+\frac{\beta_{3}}{\beta_{2}}g_{r}\log(1-\beta_{2}g_{r}t)+\frac{\beta_{3}}{\beta_{2}}g_{r}\phi(t)}.
  \end{equation*}
\end{thm}
\begin{proof}
  The partial derivatives of the right-hand side of
  \cref{eq-continuous-complex-cubic-ODE} with respect to $x$ and $y$
  are continuous so that the Cauchy-Lipschitz theorem applies. As a
  consequence, for any complex initial data $g_{r}$, there exists a
  unique maximal continuously differentiable solution $g$ defined on
  $\intco{0,T}$ for some $T>0$. Moreover if $\Im g_{r}$ is positive
  (\resp negative), then for all $t\in\intco{0,T}$, $\Im g(t)$ is positive
  (\resp negative). In particular $g(t)\neq 0$.\\

    Let $g_{2}$ and $g_{3}$ be the two following complex functions on
  $\R_{+}$:
  \begin{align*}
    g_{2}(t)&=g_{r}\big(1-\beta_{2}g_{r}t\big)^{-1},\\
    g_{3}(t)&=g_{r}\Big(1-\beta_{2}g_{r}t+\frac{\beta_{3}}{\beta_{2}}g_{r}\log(1-\beta_{2}g_{r}t)\Big)^{-1}.
  \end{align*}
  $g_{2}$ is a solution of $g'=\beta_{2}g^{2}$ and $g_{3}$ a solution
  of $g'=\beta_{2}g^{2}+\beta_{3}g_{2}g^{2}$. Let us define the
  following new variables:
  \begin{equation*}
    u\defi\frac{g_{r}}{g},\quad u_{3}\defi\frac{g_{r}}{g_{3}},\quad
    \alpha\defi |\beta_{2}|g_{r},\quad \beta\defi\frac{\beta_{3}}{\beta_{2}}g_{r}.
  \end{equation*}
  In these new variables, \cref{eq-continuous-complex-cubic-ODE} rewrites as
  \begin{equation}
    \label{eq-continuous-complex-cubic-ODE-new}
    u'=\alpha \del{1+\frac{\beta}{u}}.
  \end{equation}
  Let us insert the ansatz
  $g^{-1}=g_{3}^{-1}+\frac{\beta_{3}}{\beta_{2}}\phi$ (or $
  u=u_{3}+\beta\phi$) into \cref{eq-continuous-complex-cubic-ODE-new} to get
  the ODE satisfied by $\phi$:
  \begin{equation}
    \label{eq-continuous-complex-cubic-ODE-phi}
    \phi'=-\frac{\alpha}{1+\alpha t}\frac{\beta\log(1+\alpha
      t)+\beta\phi}{u_{3}+\beta\phi}.
  \end{equation}
  The function $g$ of \cref{thm-complex-cubic-ODE} is a solution of the Cauchy problem
 \eqref{eq-complex-cubic-ODE-CauchyPb} \ifft $\phi$ satisfies
  \eqref{eq-continuous-complex-cubic-ODE-phi} with initial condition $\phi(0)=0$.\\
  
  Let $D$ be the following open subset of $\R\times\C$:
  \begin{equation*}
    \label{eq-def-D}
    D\defi\set{(t,z)\in (-\tfrac 1{\alpha},+\infty)\times\C\tqs u_{3}(t)+\beta z\neq 0}.
  \end{equation*}
  Let $h$ be the function defined by:
  \begin{equation*}
    \begin{aligned}
      h: D&\to\C\\
      (t,z)&\mapsto -\frac{\alpha}{1+\alpha t}\frac{\beta\log(1+\alpha
        t)+\beta z}{u_{3}+\beta z}.
    \end{aligned}  \label{eq-def-h}
  \end{equation*}
  It is easy to check that the partial derivatives of $h$ are
  continuous on $D$ so that $h$ is continuously differentiable on
  $D$. Then, as $(0,0)\in D$, by the Cauchy-Lipschitz theorem, there
  exists a unique (continuously differentiable) solution $\phi$ to $\phi'=h(t,\phi)$ such that
  $\phi(0)=0$. In particular, $\phi$ is defined on $[0,T)$ for some
  $T>0$.\\

Let us now show that $\phi$ is bounded (and thus defined in fact on $\R_{+}$). We integrate \cref{eq-continuous-complex-cubic-ODE-new} out: for $t\in\intco{0,T}$,
  \begin{equation*}
    \int_{0}^{t}\frac{uu'}{u+\beta}\diff
    t=\int_{u}\frac{z}{z+\beta}\diff
    z=u-1-\beta\log\del{\frac{u+\beta}{1+\beta}}=\alpha t
  \end{equation*}
  where we used $u(0)=1$. This implicit equation for $u$ can be turned
  into an implicit equation for $\phi$ under the ansatz $u=u_{3}+\beta\phi$:
  \begin{equation}
    \label{eq-implicit-phi-complex-cubic}
    \phi=\log\del{\frac{u_3+\beta\phi+\beta}{(1+\beta)(1+\alpha t)}}.
  \end{equation}
If there exists $\tau\in [0,T)$
  such that $|\phi(\tau)|>2\pi$, we define $t_{0}$ as the smallest
  $t\in [0,T)$ such that $|\phi(t)|=2\pi$. Thus on $[0,t_{0}]$, the
  continuous function $|\phi|$ is smaller than $2\pi$ and must take all values between $0$ and
 $2\pi$. But,
  \begin{align*}
    \phi&=-\log(1+\beta)+\log\del{1+\beta\frac{\log(1+\alpha
          t)+\phi+1}{1+\alpha t}},\\
    |\phi|&\les |\log(1+\beta)|+\envert[3]{\log\del{\envert{1+\beta\frac{\log(1+\alpha
            t)+\phi+1}{1+\alpha t}}}}+\pi\\
    &\les |\log(1+\beta)|+\envert[3]{\log\del{1+|\beta|\frac{|\log(1+\alpha
            t)|+|\phi+1|}{|1+\alpha t|}}}+\pi.
  \end{align*}
  If $|g_{r}|$ is small enough, $|1+\beta|$ is close to $1$ and
  $\arg(1+\beta)$ is small so that $|\log(1+\beta)|$ can be made
  smaller than $\tfrac 12$ (say). Let us recall that in polar
  coordinates, $g_{r}=\rho_{r}e^{i\theta_{r}}$. A simple inspection of $|1+\alpha
  t|$ as a function of $t\in\R_{+}$ shows that
  \begin{equation}\label{eq-min-f-of-t}
    |1+\alpha t|\ges
    \begin{cases}
      1&\text{if $|\theta_{r}|\in\intcc[0]{0,\tfrac\pi 2}$,}\\
      |\sin\theta_{r}|&\text{if $|\theta_{r}|\in\intoo[0]{\tfrac\pi 2,\pi}$.}
    \end{cases}
  \end{equation}
  As a consequence, for $g_{r}\in\Omega_{\epsilon}$ and $\epsilon$ small enough,
  \begin{equation*}
    |\beta|\frac{|\phi+1|}{|1+\alpha
      t|}\les\envert[2]{\frac{\beta_{3}}{\beta_{2}}}\frac{|g_{r}|}{|\sin\theta_{r}|}(|\phi|+1)\les
    \envert[2]{\frac{\beta_{3}}{\beta_{2}}}(2\pi+1) \epsilon\les\frac 13.
  \end{equation*}
  There remains to bound
  \begin{equation*}
    |\beta|\frac{|\log(1+\alpha t)|}{|1+\alpha t|}\les|\beta|\frac{\frac 12|\log|1+\alpha t|^{2}|+\pi}{|1+\alpha t|}.
  \end{equation*}
  Firstly,
  \begin{equation*}
    |\beta|\frac{\pi}{|1+\alpha t|}\les
    \pi\envert[2]{\frac{\beta_{3}}{\beta_{2}}}\frac{|g_{r}|}{|\sin\theta_{r}|}\les\frac 13.
  \end{equation*}
  Secondly, let us define the functions $f,g:\intco{0,T}\to\R$ by
  $f(t)=|1+\alpha t|^{2}$ and $g(t)=\frac{\log f(t)}{\sqrt{f(t)}}$. From
  \begin{align*}
    f'(t)=2\rho_{r}|\beta_{2}|\cos\theta_{r}+2t\rho_{r}^{2}|\beta_{2}|^{2}\quad\text{and}\quad
    g'(t)=\frac{f'(t)}{2f(t)^{3/2}}\del{2-\log f(t)},
  \end{align*}
  we deduce that $0\les g(t)\les \frac 2e$ so that
  \begin{equation*}
    \frac{|\beta|}{2}\frac{|\log|1+\alpha t|^{2}|}{|1+\alpha
      t|}\les\envert[2]{\frac{\beta_{3}}{\beta_{2}}}\frac{|g_{r}|}{e}\les\frac 13.
  \end{equation*}
  Thirdly,
  \begin{equation*}
    |\phi|\les\frac 12+\pi+\log 2 <2\pi
  \end{equation*}
  which is a contradiction and proves that
  $\enVert{\phi}_{\infty}<2\pi$. Finally, to prove that $\phi$ is a
  holomorphic function of $g_{r}$, we note that it is a solution of
  the implicit equation
  \begin{equation*}
  F_{t}(g_{r},z)=\frac{u_{3}+\beta z+\beta}{1+\beta}-(1+\alpha
  t)e^{z}=0
\end{equation*}
(see \cref{eq-implicit-phi-complex-cubic}). But $F_{t}$ is holomorphic
on $\del{\C\setminus\set[0]{-\frac{\beta_{2}}{\beta_{3}}}}\times\C$ so that
for $g_{r}\in\Omega_{\epsilon}$ and $\epsilon$ small enough, $\phi$ is holomorphic
in $g_{r}$ by the implicit function theorem.
\end{proof}

Let $\epsilon$ be a positive real number. Denoting the ratio
$\beta_{3}/\beta_{2}$ by $\beta_{3,2}$, we define $\cH_{\epsilon}$ as the following compact subset of $\C$:
\begin{equation*}
  \cH_{\epsilon}\defi
  \begin{cases}
      \set{z\in\C\tqs |z|\les\frac{\epsilon}{1+3\pi|\beta_{3,2}|\epsilon}}&\text{if $|\arg z|\in\intcc[0]{0,\frac\pi 2}$,}\\
      \set[1]{z\in\C\tqs |z|\les\frac{\epsilon|\sin\arg
        z|}{1+\epsilon|\beta_{3,2}|\del{\envert{\log|\sin\arg z|}+3\pi}}}&\text{if $|\arg
      z|\in\intoo[0]{\frac\pi 2,\pi}$.} 
\end{cases}
\end{equation*}
\begin{figure}[!htp]
    \centering
    \begin{tikzpicture}[scale=2.5,line cap=round,domain=90:270,samples=100]
      \draw[help lines,step=0.5cm] (-1.4,-1.4) grid (1.4,1.4);
      \draw[->] (-1.5,0) -- (1.5,0) node[right] {$\Re z$} coordinate(x
      axis);
      \draw[->] (0,-1.5) -- (0,1.5) node[above] {$\Im z$}
      coordinate(y axis);
      \draw[xshift=1 cm] (0pt,-1pt) -- (0pt,1pt)
      node[above right=.5pt,fill=white] {$\epsilon$};
      \draw[xshift=.67 cm] (0pt,-1pt) -- (0pt,1pt)
      node[above right,fill=white] {$\tilde \epsilon$};
      \filldraw[thick, fill=gray!50!white,opacity=.5] (0cm,-0.67cm) arc [start angle=-90, end
      angle=90, radius=0.67cm] plot ({cos(\x)*abs(sin(\x))/(1.5-0.054*ln(abs(sin(\x))))}, {sin(\x)*abs(sin(\x))/(1.5-0.054*ln(abs(sin(\x))))});
      \draw[thick,dashed,gray] (0cm,1cm) arc
      [start angle=90, end angle=270, radius=.5cm] arc [start
      angle=90, end angle=270, radius=.5cm] arc [start angle=-90, end
      angle=90, radius=1cm];
      \draw[thin,red, domain=-180:180,samples=100] plot ({.67cm*cos(.5*\x)*cos(.5*\x)*cos(\x)}, {.67cm*cos(.5*\x)*cos(.5*\x)*sin(\x)});
    \end{tikzpicture}
    \caption{In gray, the domain $\cH_{\epsilon}$ ($\tilde
      \epsilon\defi\frac{\epsilon}{1+3\pi|\beta_{3,2}|\epsilon}$, here
      $3\pi|\beta_{3,2}|=\frac 12$). The red (\resp dashed) line is the boundary of the
      cardioid $\scC_{\tilde\epsilon}$ (\resp of $\Omega_{\epsilon}$).}
    \label{fig-Heps}
  \end{figure}
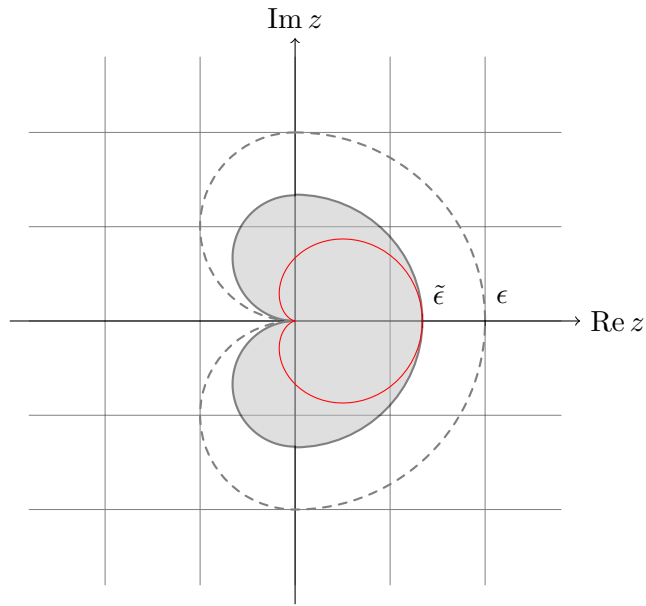
\begin{cor}
  \label{thm-cubic-cpx-flow-bounded}
  Let $\epsilon$ be smaller or equal to $1/|\beta_{3,2}|$. If $g_{r}\in\cH_{\epsilon}$ then the solution of the Cauchy problem
  \eqref{eq-complex-cubic-ODE-CauchyPb} given in
  \cref{thm-complex-cubic-ODE} is bounded above (in modulus) by $\epsilon$.
\end{cor}
\begin{proof}
  Let us denote by $f$ the function from $\R_{+}$ to $\R_{+}$ defined
  by $f(t)=\envert{1+\alpha t}$. By \cref{thm-complex-cubic-ODE},
  \begin{equation*}
    |g(t)|=\frac{|g_{r}|}{|1+\alpha t+\beta_{3,2} g_{r}\log(1+\alpha
      t)+\beta_{3,2} g_{r}\phi(t)|}
  \end{equation*}
  where $\beta_{3,2}=\beta_{3}/\beta_{2}$. But
  \begin{align}
    \envert{1+\alpha t+\beta_{3,2} g_{r}\log(1+\alpha
      t)+\beta_{3,2} g_{r}\phi(t)}&\ges \envert{1+\alpha
                                    t}-|\beta_{3,2} g_{r}|\del{|\log(1+\alpha
                                    t)|+|\phi(t)|}\nonumber\\
    &\ges \envert{1+\alpha t}-|\beta_{3,2}
      g_{r}|\del{\envert{\log|1+\alpha t|}+3\pi}.\label{eq-borne-inf-module-denom}
  \end{align}
  Let us assume for a moment that this last lower bound is
  non-negative. Then,
  \begin{align}
    |g(t)|&\les \frac{|g_{r}|}{\envert{1+\alpha t}-|\beta_{3,2}
            g_{r}|\del{\envert{\log|1+\alpha t|}+3\pi}}\les \epsilon\label{eq-borne-sup-g-cubic}\\
    \intertext{\ifft}
    |g_{r}|&\les\frac{\epsilon|1+\alpha t|}{1+\epsilon|\beta_{3,2}|\del{\envert{\log|1+\alpha t|}+3\pi}}.\label{eq-borne-sup-g0}
  \end{align}
  Now, the right-hand side of \cref{eq-borne-inf-module-denom} is
  non-negative \ifft
  \begin{equation}
    |g_{r}|\les\frac{|1+\alpha
      t|}{|\beta_{3,2}|\del{\envert{\log|1+\alpha t|}+3\pi}}.\label{eq-condition-g0-denom-positif}
  \end{equation}
  Consequently, if we impose \eqref{eq-borne-sup-g0},
  \cref{eq-condition-g0-denom-positif} is satisfied and this justifies \aposteriori
  the first inequality of \cref{eq-borne-sup-g-cubic}.
  
  Finally, let $g$ be the function from $\R_{+}$ into itself defined by
  $g(x)=\frac{\epsilon x}{1+\epsilon|\beta_{3,2}|\del{|\log x|+3\pi}}$. It is easy to
  check that if $\epsilon\les 1/|\beta_{3,2}|$, $g$ is increasing. As a
  consequence, from \cref{eq-min-f-of-t},
  \begin{equation*}
    |g_{r}|\les
    \begin{cases}
      \frac{\epsilon}{1+3\pi|\beta_{3,2}|\epsilon}&\text{if
        $\theta_{r}\in\intcc[0]{0,\frac\pi 2}$,}\\
      \frac{\epsilon|\sin\theta_{0}|}{1+\epsilon|\beta_{3,2}|\del{\envert{\log|\sin\theta_{r}|}+3\pi}}&\text{if
      $\theta_{r}\in\intco[0]{\frac\pi 2,\pi}$}
    \end{cases}
  \end{equation*}
  implies \eqref{eq-borne-sup-g0} and $|g(t)|\les \epsilon$.
\end{proof}

\subsection{Differential flow of higher degree}
\label{sec-diff-flow-higher-degree}

Let us now consider more general complex differential equations and
prove that for sufficiently small initial conditions, their solutions
are uniformly bounded as well. Let $U$ be a complex neighbourhood of
$0$. Let $f$ be the following function:
\begin{equation}\label{eq-higher-vector-field}
  \begin{aligned}
    f:\R_{+}\times U&\to\C\\
    (t,z)&\mapsto \beta_{2}z^{2}+\beta_{3}z^{3}+z^{4}h(z)
  \end{aligned}
\end{equation}
where $h$ is holomorphic on $U$. We are interested in the following
Cauchy problem:
\begin{subequations}
  \label{eq-continuous-higher-cpx-Cauchy-pb}
    \begin{align}
    g'&=f(t,g)\label{eq-continuous-higher-cpx-ODE}\\
    g(0)&=g_{r}\in\C.\label{eq-continuous-higher-cpx-initial-cond}
  \end{align}
\end{subequations}

\begin{defn}[Disks]
  \label{def-disks}
  Let $r$ be real and positive. We will denote by $\bbD_{r}$ the open
  disk of radius $r$ centered at $0$. An open disk $\bbS_{r}$ of
  radius $r$ centered at $r$ will be called a \NSd. $\bbS_{r}$ is the
  set of complex numbers $z$ such that $\Re\del{\frac 1z}>\frac 1{2r}$
  or equivalently $|z|<2r\cos(\arg z)$.
\end{defn}

\begin{thm}\label{thm-complex-higher-ODE-upper-bound}
  Let $\epsilon$ be a sufficiently small positive real number. There
  exists a simply connected domain $D_{\epsilon}$ of $\C$ such that
  $D_{\epsilon}\subset U$, $0\in\partial D_{\epsilon}$, and
  $D_{\epsilon}$ contains a
  Nevanlinna-Sokal disk $\bbS_{\delta}$, $\delta=\frac
  16\frac{\epsilon}{1+\frac{3\pi}{2}|\beta_{3,2}|\epsilon}$, such that if
  $g_{r}\in D_{\epsilon}$ then, for all $t\ges 0$, the unique
  maximal solution of the Cauchy problem
  \eqref{eq-continuous-higher-cpx-Cauchy-pb} defined on $\R_{+}$
  belongs to $\bbD_{\epsilon}$.
\end{thm}
\begin{proof}
  By the Leau-Fatou flower theorem, there exists an attracting petal
  directed along the positive real axis, see \cref{fig-petal-2}. Consequently
  we know that there exists an open connected and simply connected
  complex subset as claimed in
  \cref{thm-complex-higher-ODE-upper-bound}. But we have no control on
  the size of the Nevanlinna-Sokal disk it contains. Thus we follow a
  more pedestrian road.\\
  
  By a biholomorphic change of variable $y=\varphi(g)$,
  \cref{eq-continuous-higher-cpx-ODE} rewrites as
  \begin{equation*}
    y'=-y^{2}+\frac{\beta_{3}}{\beta_{2}^{2}}y^{3}.
  \end{equation*}
  This comes from a theorem of G.~Szekeres, the proof of which can be
  found in \cite{Loray2005aa}. Let us briefly repeat the arguments
  here. Let $U$ be a complex neighbourhood of $0$ and $f$ a holomorphic
  function on $U$. The corresponding ODE writes
  \begin{equation*}
    z'=f(z)\iff\frac{\diff*z}{f(z)}=\diff*t.
  \end{equation*}
  The meromorphic differential form $\omega=\frac{\diff*z}{f(z)}$ is
  the dual form of the vector field $f(z)\partial_{z}$. Let us assume
  that $0$ is a pole of $\omega$ of order $p+1$, $p\ges 1$, and note
  the corresponding residue $a_{-1}$. In the case of the vector field
  \eqref{eq-higher-vector-field}, $p=1$ and
  $a_{-1}=-\frac{\beta_{3}}{\beta_{2}^{2}}$.

  Let us explain how to
  prove that
  \begin{equation*}
\omega=\frac{\diff*y}{-y^{p+1}-a_{-1}y^{2p+1}}
\end{equation*}
after a
  biholomorphic change of coordinates $y=\varphi(z)$. Firstly, as $1/f$ has a
  pole of order $p+1$ at $0$, there exists a holomorphic function $v$
  (near $0$) such that
  \begin{equation*}
    \omega=\frac{a_{-1}\diff z}{z}+\diff*{\del{\frac{v}{z^{p}}}},\quad
    v(0)\neq 0.
  \end{equation*}
  Secondly, by integrating the equality
  \begin{equation*}
    \frac{a_{-1}\diff z}{z}+\diff*{\del{\frac{v}{z^{p}}}}=\frac{\diff*y}{-y^{p+1}-a_{-1}y^{2p+1}},
  \end{equation*}
  one obtains an implicit equation for $y$:
  \begin{equation}
    a_{-1}\log z+\frac{v(z)}{z^{p}}=\frac 1{py^{p}}+a_{-1}\log y-\frac{a_{-1}}{p}\log(1+a_{-1}y^{p}).\label{eq-implicit-for-y}
  \end{equation}
  Thirdly, defining $y=\varphi(z)\fide z u(z)$, \cref{eq-implicit-for-y}
  becomes
  \begin{equation*}
    \frac{1}{pu^{p}}+a_{-1}z^{p}\log u-\frac{a_{-1}}{p}z^{p}\log\del{1+a_{-1}(zu)^{p}}-v(z)=0,
  \end{equation*}
  and by the implicit function theorem, there exists a neighbourhood $V$
  of $0$ on which $u$ (then $\varphi$) is holomorphic ($\varphi$ is even
  biholomorphic because $\varphi'(0)=u(0)\neq 0$).\\

  From \cref{thm-complex-cubic-ODE}, there exists a unique maximal
  solution $y(t)$ of the Cauchy problem
  \begin{equation*}
    y'=-y^{2}+\frac{\beta_{3}}{\beta_{2}^{2}}y^{3},\quad y(0)=y_{0}
  \end{equation*}
  which is defined on $\R_{+}$. Moreover, from \cref{thm-cubic-cpx-flow-bounded}, if $r'$ is smaller than
  $\frac{\beta_{2}^{2}}{|\beta_{3}|}$, then if $y_{0}\in\cH_{r'}$,
  $y(t)\in\bbD_{r'}$ for all $t\ges 0$. If $r'$ is small enough then
  $\bar\bbD_{r'}$ and thus $\bar\cH_{r'}$ are subsets of $V$. As a
  consequence, if $g_{r}\in\varphi^{-1}(\cH_{r'})\fide\Omega$ then
  $g(t)=\varphi^{-1}\del{y(t)}$ is the unique maximal solution of
  \eqref{eq-continuous-higher-cpx-Cauchy-pb} defined on $\R_{+}$ and for all $t\ges
  0$, $g(t)\in\varphi^{-1}(\bbD_{r'})$. 

  Note that $\varphi^{-1}(\bbD_{r'})$ is bounded, open and connected as
  the image of a bounded, open and (arc-)connected subset by a
  (non-constant) holomorphic function. So let us prove that $\varphi^{-1}(\cH_{r'})$
  contains a \NSd by showing that there exists $\delta >0$ such that
  $\varphi(\bbS_{\delta})\subset\cH_{r'}$. We proceed in two steps. We first prove that if $z\in\bbS_{\delta}$ then $\Re\varphi(z)\ges
  0$. Indeed, there
  exists a holomorphic function $\chi$ on $V$ such that
  $\varphi(z)=-\beta_{2}z+z^{2}\chi(z)$. Let $\theta$ be an argument
  of $z$.
  \begin{align*}
    \Re\varphi(z)&\ges 0\iff |\beta_{2}||z|\cos(\theta) \ges |z|^{2}\cos(2\theta)\Re\chi(z)-|z|^{2}\sin(2\theta)\Im\chi(z).
  \end{align*}
  If $|z|=0$ then $\Re\varphi(z)=0$. Otherwise,
  \begin{align*}
    \Re\varphi(z)&\ges 0\iff |\beta_{2}|\cos(\theta) \ges |z|\cos(2\theta)\Re\chi(z)-|z|\sin(2\theta)\Im\chi(z).
  \end{align*}
  But
  \begin{align*}
    |z|\cos(2\theta)\Re\chi(z)-|z|\sin(2\theta)\Im\chi(z)&\les
                                                           2\delta K_{\delta}\cos(\theta)\del{|\cos(2\theta)|+|\sin(2\theta)|}\\
    &\les 4\delta K_{\delta}\cos(\theta)
  \end{align*}
where $K_{\delta}=\sup_{z\in\bar\bbS_{\delta}}|\chi(z)|$ and $\lim_{\delta\to
  0}K_{\delta}=0$.  Thus, for $\delta$ small enough, $4 \delta K_{\delta}<|\beta_{2}|$ and
$\Re\varphi(z)\ges 0$.

We then show that
$|\varphi(z)|\les\frac{r'}{1+3\pi\frac{|\beta_{3}|}{\beta_{2}^{2}}r'}$:
\begin{align*}
  |\varphi(z)|\les|\beta_{2}||z|+K_{\delta}|z|^{2}&\les
                                                    2|\beta_{2}|\delta\del[2]{1+2\frac{\delta
                                                    K_{\delta}}{|\beta_{2}|}}\les 3|\beta_{2}|\delta.
\end{align*}
As a consequence, fixing
\begin{equation*}
  3|\beta_{2}|\delta=\frac{r'}{1+3\pi\frac{|\beta_{3}|}{\beta_{2}^{2}}r'},\quad \epsilon=\frac{2}{|\beta_{2}|}r',
\end{equation*}
the theorem is proved.
\end{proof}

If the initial value $g_{r}$ is real and $h$ real-valued, we can be more precise:
\begin{thm}
  There exists $g_{c}\in\R_{+}^{*}$
  such that for all $g_{r}$ real in $(0,g_{c})$, the Cauchy problem
 \eqref{eq-continuous-higher-cpx-Cauchy-pb} has a unique
  decreasing solution $g$ defined on $\R_{+}$. Moreover let $\epsilon$
  be a positive real number smaller than $1$. Then there exists a
  positive real number $\alpha(\epsilon)$ (smaller than $1$) such that if
  $g_{r}\in(0,\alpha g_{c})$, $g$ satisfies
  \begin{multline}
    \label{eq-continuous-higher-flow-solution-estimates}
    \frac{g_{r}}{1-\beta_{2}g_{r}t+\frac{\beta^{-}_{3}}{\beta_{2}}g_{r}\log(1-\beta_{2}g_{r}t)+\frac{\beta^{-}_{3}}{\beta_{2}}g_{r}\phi_{-}(t)}<g(t)\\
    <\frac{g_{r}}{1-\beta_{2}g_{r}t+\frac{\beta^{+}_{3}}{\beta_{2}}g_{r}\log(1-\beta_{2}g_{r}t)+\frac{\beta_{3}^{+}}{\beta_{2}}g_{r}\phi_{+}(t)},
  \end{multline}
  with $\beta_{3}^{-}\defi
  (1-\sgn(\beta_{3})\epsilon)\beta_{3}$, $\beta_{3}^{+}\defi
  (1+\sgn(\beta_{3})\epsilon)\beta_{3}$ and $\phi_{-},\phi_{+}$ two
  bounded functions on $\R_{+}$.
\end{thm}
\begin{proof}
  By the Cauchy-Lipschitz theorem, for all $g_{r}\in U\cap\R$, there exists
  a unique solution of the Cauchy problem
 \eqref{eq-continuous-higher-cpx-Cauchy-pb} defined on $[0,T)$ for some
  $T>0$. Let $a:U\cap\R\to\R$ be defined as
  \begin{equation*}
    f(t,x)\defi\beta_{2}x^{2}a(x),
  \end{equation*}
  so that $a(x)=1+\beta_{3,2} x+\tfrac 1{\beta_{2}}x^{2}h(x)$. Let $g_{c}$
  be the smallest positive zero of $a$ in $U\cap\R$ if it exists and
  $\sup U\cap\R$ otherwise. As $f(t,x)$ is negative if $x\in (0,g_{c})$, by 
  unicity of the solutions of the Cauchy problem
 \eqref{eq-continuous-higher-cpx-Cauchy-pb}, for all $t\in [0,T)$, $f(t,g(t))$
  is negative. As a consequence, $0<g(t)<g_{r}$ and $g$ is in fact
  defined on $\R_{+}$ (and decreasing).\\
  
  Moreover if $g_{r}$ is small enough (say smaller than $\alpha g_{c}$),
  \begin{equation*}
    \beta_{2}g^{2}+(1-\sgn(\beta_{3})\epsilon)\beta_{3}g^{3}<\beta_{2}g^{2}+\beta_{3}g^{3}+g^{4}h(g)<\beta_{2}g^{2}+(1+\sgn(\beta_{3})\epsilon)\beta_{3}g^{3}
  \end{equation*}
  for all $t\ges 0$ and by \cref{thm-complex-cubic-ODE}, $g$ satisfies \cref{eq-continuous-higher-flow-solution-estimates}.
\end{proof}

\newpage
\printbibliography

\contactrule
\contactVRivasseau
\contactFVT

\end{document}
